\documentclass[sigconf, nonacm]{acmart}

\newcommand\vldbdoi{XX.XX/XXX.XX}

\newcommand\vldbavailabilityurl{https://anonymous.4open.science/r/optiaqp}
\newcommand\vldbpagestyle{plain}

\usepackage{times} 
\usepackage{bm} 
\usepackage{color} 
\usepackage{xcolor}
\usepackage{hyperref} 
\usepackage{graphicx} 
\usepackage{multirow}
\usepackage[tight]{subfigure}
\usepackage{verbatimbox}
\usepackage{tcolorbox}
\usepackage{wrapfig}
\usepackage{titlesec}
\usepackage{enumitem}
\usepackage{pifont}
\usepackage{tabularx}
\usepackage[ruled,linesnumbered,noend]{algorithm2e}
\usepackage{listings}
\usepackage{xspace}

\makeatletter \let\c@table\c@figure \makeatother

\DeclareMathOperator*{\argmin}{arg\,min}

\newenvironment{pkl}{%
\begin{itemize}%
\setlength\itemsep{-\parskip}%
\setlength\parsep{0in}%
}{%
\end{itemize}}

\newenvironment{enu}[1][label={\arabic*}.]{%
\begin{enumerate}[wide,#1]%
\vspace{-\topsep}%
\setlength\itemsep{-\parskip}%
\setlength\parsep{0in}%
}{%
\vspace{-\topsep}%
\end{enumerate}}

\setitemize{leftmargin=*}
\setenumerate{leftmargin=*}

\newcommand{\sysname}{OptiAQP\xspace}

\begin{document}

\title{Index-Assisted Stratified Sampling for Online Aggregation}

\author{Yunnan Yu}
\affiliation{%
  \institution{University at Buffalo}
}
\email{yunnanyu@buffalo.edu}

\author{Zhuoyue Zhao}
\affiliation{%
  \institution{University at Buffalo}
}
\email{zzhao35@buffalo.edu}




\newcommand{\revcut}[1]{}
\newcommand{\reva}[1]{{\color{black}{#1}}}
\newcommand{\revb}[1]{{\color{black}{#1}}}
\newcommand{\revc}[1]{{\color{black}{#1}}}
\newcommand{\zy}[1]{{\color{red}(ZY: #1)}}
\newcommand{\yn}[1]{{\color{blue}#1}}

\newcommand{\Paragraph}[1]{{\noindent \bf #1}}


\newcommand{\cut}[1]{}
\newcommand{\eat}[1]{}

\SetKwProg{Def}{def}{}{}
\SetKwFor{For}{for (}{)}{}
\SetKwFor{Foreach}{foreach (}{)}{}
\SetKwComment{Comment}{//}{}
\SetKw{Break}{break}
\SetKw{Continue}{continue}
\SetKwRepeat{Do}{do}{while}

\begin{abstract}

    Ad-hoc queries over frequently updated data in a flat schema are common in
	real-time data analysis applications and often require very low
	latency.  Online aggregation can achieve so by providing approximate
	aggregation answers with confidence bound guarantees. It relies on the
	ability to draw samples online in a linear time to sample size rather
	than database size, which can be supported by index-assisted
	Sampling-based Approximate Query Processing (S-AQP) systems.  However,
    \revc{the query latencies of approximate queries in these systems} can
    still suffer from excessive sampling cost required to achieve a desired
    confidence bound, due to increased sample
	size for data with high variance in value distribution and selectivity.
	Classic stratified sampling methods with Neyman allocation can minimize
	sample size in theory, but several challenges prevent it from being
	applicable in index-assisted S-AQP systems, including requiring apriori
	statistics, high optimization cost, and inaccurate sampling cost model
	based on sample size.  Towards that, we design index-assisted
	stratified sampling for online aggregation, which features a two-phase
	sampling framework. Samples drawn from first phase are used for both
	online aggregation and optimizing future sampling cost, while the
	second phase continues the online aggregation using the optimized
	strata. We prove optimal stratification and sample size allocation
	strategies for index-based sampling cost model, and design several
	greedy and dynamic programming based optimization methods to balance
	optimization cost and effectiveness in cost reduction. We evaluate our
	methods on several real-world and synthetic datasets and queries, and
    the results show ours \revc{consistently achieve good speedup
	and, in extreme cases, up to 3x speedup and
    98708x speedup, when compared to index-assisted uniform sampling and classic
    scan-based stratified sampling respectively}.

\end{abstract}

\maketitle

\pagestyle{\vldbpagestyle}

\ifdefempty{\vldbavailabilityurl}{}{
\vspace{.3cm}
\begingroup\small\noindent\raggedright\textbf{Artifact Availability:}\\
The source code, data, and/or other artifacts have been made available at \url{\vldbavailabilityurl}.
\endgroup
}


\vspace{-6 pt}
\section{Introduction}
\label{sec:intro}

\begin{figure}[t]
    \centering
    \includegraphics[width=.7\linewidth]{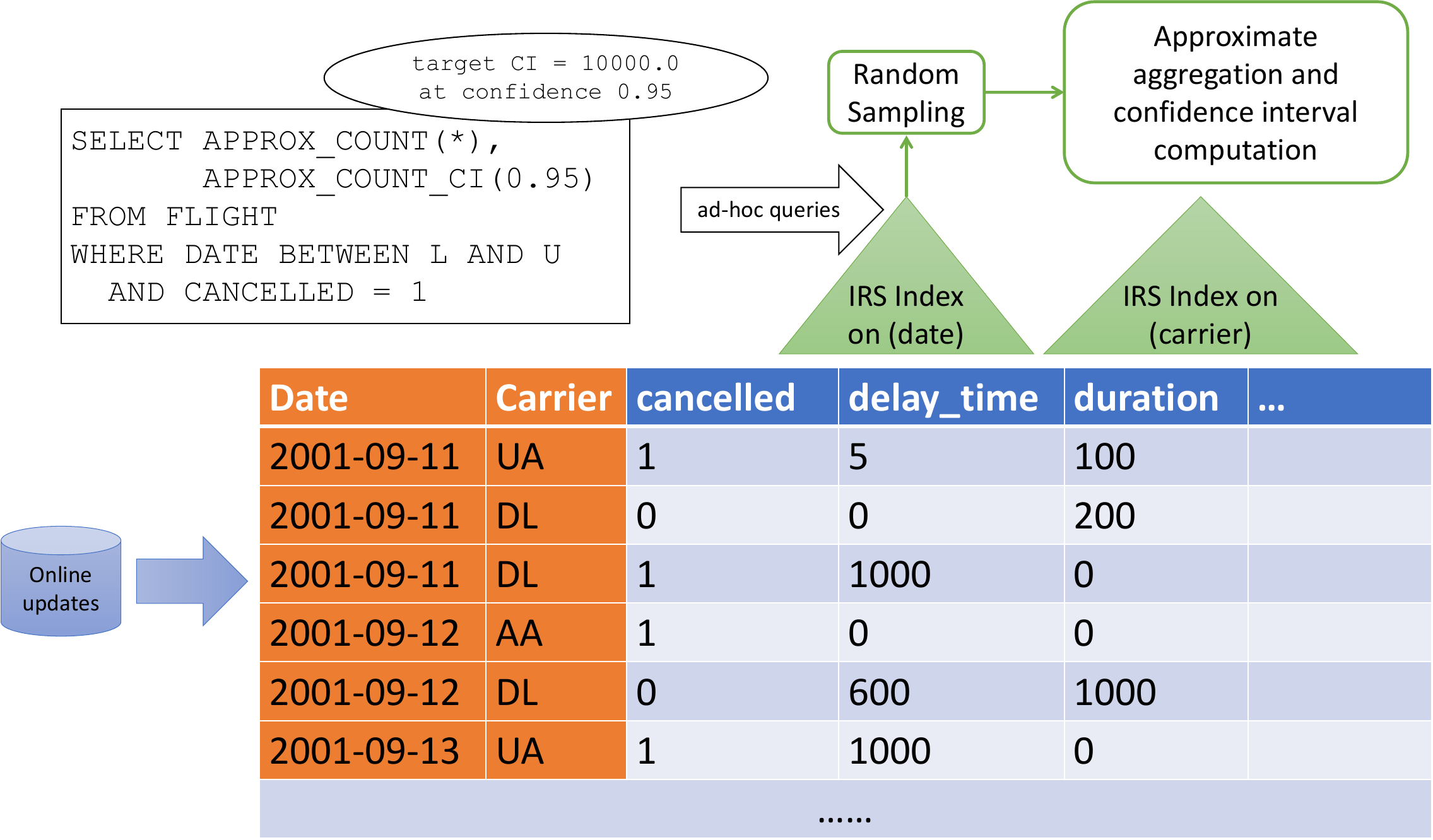}
    \vspace{-12 pt}
    \caption{Ad-hoc queries over in index-assisted S-AQP system}
	\vspace{-10pt}
    \label{fig:system_setup}
\end{figure}
\begin{figure}[t]
  \centering
\vspace{-7pt}
  \includegraphics[width=.7\linewidth]{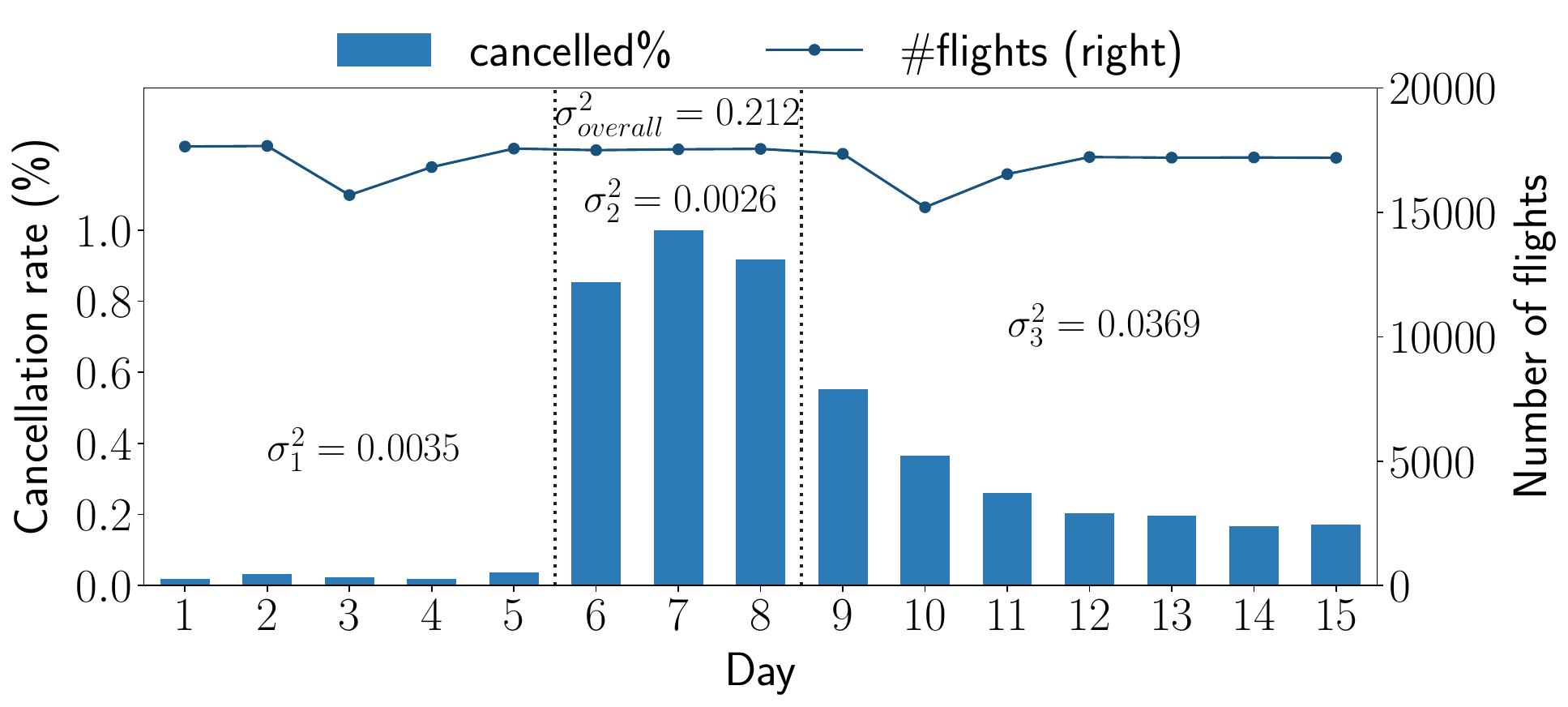}
  \vspace{-16 pt}
    \caption{\revc{The percentage of cancelled flights and the total flight
    during the date range. Overall estimator variance is higher than
    those of any of the three smaller date ranges.}
    \Description{The acceptance rates.}}
  \vspace{-5pt}
  \label{fig:flight_sample}
\end{figure}

Ad-hoc queries often require low latency in real-time business intelligence
and reporting applications~\cite{Ding:2016:SSA:2882903.2915249}, where users
often execute ad-hoc range queries over fresh data in a flat schema (such as
online ads log data, flight status monitoring, etc.).  Materialization and
online full scan are too expensive as they are large and frequently updated.
Online Aggregation~\cite{hellerstein97:_onlin} is a sampling-based approximate query processing
method for users to obtain approximate answers with low
response time, where samples must be drawn
online~\cite{10.1145/3517804.3526068} to ensure the correctness of results
under updates and independence of multiple queries.  To achieve so, state of
the art is to utilize index-assisted
sampling~\cite{hellerstein97:_onlin,10.14778/3611540.3611602,10.14778/3538598.3538606}
(Figure~\ref{fig:system_setup}), which can produce highly accurate results
within seconds due to the ability to draw samples with a cost (almost)
linear to sample size instead of database size.

However, the time to achieve a given confidence interval in these systems may
vary heavily. The root cause is \revc{these systems} often draw samples uniformly without
considering the variations of data distribution and query selectivity within
a dataset, which results in excessive amount of samples to be drawn
uniformly across an entire query range.
\revc{
As a concrete example (Figure~\ref{fig:system_setup}), consider an ad-hoc range
query over a real-world dataset: the US airline on-time performance
dataset~\cite{DVN/HG7NV7_2008}, on which one wants to find the number of
flights that were cancelled in some date range for analysis.  For most of the
date ranges, the cancellation rate in the US is consistently low, leading to
very accurate \texttt{COUNT} estimation with a small number of samples.
However, if we happen to query a range where part of the query range has
cancellation spikes due to special events (e.g., a snow storm)
(Figure~\ref{fig:flight_sample}), the estimator variance will be extremely
high, leading to high latency to achieve a desired confidence bound.  On the
other hand, if we query the three smaller ranges as shown in
Figure~\ref{fig:flight_sample} and combine their estimators instead, all of
them have smaller estimator variances, which could lead to much lower latency.}



This is known as stratified sampling, which breaks the entire query range into
smaller partitions and independently sample from each partition and can often
reduce the total sample needed and thus reduce query time. However, all known
stratified sampling in AQP systems require scanning the database table for
preprocessing online or offline, which either cause long preprocessing delays
during query time or cannot handle data updates efficiently.  As examples,
VerdictDB, a state-of-the-art AQP middleware for existing
DBMS~\cite{10.1145/3183713.3196905}, uses a form of stratified sampling where
stratification is done over distinct values of range predicate columns, and
sample size is allocated evenly to different partitions.
QuickR~\cite{Kandula:2016:QLA:2882903.2882940} performs online scan-based
sampling using a slightly improved version called distinct sampling, where
sample size allocation across distinct values where a minimum sample size is
guaranteed for each distinct value.  In~\cite{10.1145/1242524.1242526}, Surajit
et el. optimizes offline samplings for query workload, which utilizes a
well-known technique, Neyman allocation, to optimally allocate sample sizes to
different partitions in order to minimize total sample size needed. 
none of these works it targets offline sampling scenario where the
The crux of the problem is having to know distinct keys and/or sum statistics to
create strata and/or perform sample size allocation, which currently cannot be
easily derived in an online fashion. In addition, traditional methods only
consider sample size as a simple cost model for sampling-based approximate
query evaluation, but ignore the fact that sampling in different strata can
often have very different per-sample cost.

Motivated by that, we propose a two-phase index-assisted approximate query
evaluation framework based on stratification, \sysname, which can use a small
initial sample to obtain a number of candidate partition boundaries and
estimated statistics to derive an optimized stratification strategy for the
subsequent approximate query cost until a desired confidence bound is achieved.
Combining the existing concurrency-safe index-assisted
sampler~\cite{10.14778/3538598.3538606} that can draw samples from
a latest snapshot at any time under snapshot isolation, our method will
significantly reduce latencies of ad-hoc approximate aggregation
queries over frequently updated data with no apriori statistics.
The main challenges in our proposed method are threefold: (1) how to effectively
utilize sampling indexes to collect initial samples for stratification; (2) how
to balance the first phase optimization cost and the sample cost reduction it
results in the second phase; (3) how to accurately model the sampling cost.
To summarize, We claim the following contribution:
\vspace{-2 pt}
\begin{pkl}
    \item We describe how to perform index-assisted stratified sampling and
        analyze the cost model of approximate query evaluation in
        index-assisted S-AQP system with stratification.
    \item We prove the optimal stratification and sample size allocation
        strategies under the cost model given full statistics.
    \item We design \sysname, a two-phase index-assisted approximate query
        evaluation framework and introduce 4 stratification optimization
        methods for \sysname with different trade-offs between optimization
        overhead and cost reduction.
    \item We implement \sysname in PostgreSQL based on a state-of-the-art
        index-assisted S-AQP system~\cite{10.14778/3611540.3611602,10.14778/3538598.3538606}.
        We validate the query time reduction of
        \sysname over existing approaches compared to several baselines over
        several real-world datasets and a synthetic benchmark based on the
        skewed TPC-H.
    \item We discuss possible future extensions for handling more query
        types including joins, spatial ranges, and group-by queries,
        and the additional design space.
\end{pkl}
\vspace{-2 pt}
The rest of the paper is organized as follows. In Section~\ref{sec:background},
we provide necessary background on independent range sampling and the basic
S-AQP algorithms in index-assisted S-AQP systems. In
Section~\ref{sec:analysis}, we provide analysis of the cost model of
index-assisted stratified sampling and the optimal stratification and sample
size allocation strategy.  Based on the analysis, we provide the design and
implementation details of \sysname in Section~\ref{sec:algorithm}. We provide
extensive evaluation in Section~\ref{sec:exp}. We survey related works and
provide a discussion of possible extensions in Section~\ref{sec:related} and
conclude the paper in Section~\ref{sec:conclusion}.

\section{Problem Formulation and Background}
\label{sec:background}

\begin{table}[t]
    \caption{List of notations in this paper}
	\vspace{-10pt}
    \label{tab:notations}
    {\footnotesize
    \begin{tabular}{r|l}
	\hline
        $T$ & a relational table \\
        $\mathcal{Q}, \tilde{\mathcal{Q}}$ & exact query and approximate query\\
        $A, \tilde{A}$ & exact answer and its unbiased estimator\\ 
        $\varepsilon$ & half-width of confidence interval \\
        $1 - \delta$ & confidence level \\
        $\gamma_{AGG(e)}(\cdot)$ & aggregation operator~\cite{DBLP:books/mg/SKS20} that computes $AGG(\pi_e(\cdot))$\\
        $\Gamma_n(\cdot)$ & i.i.d. sampling with replacement of size $n$ over ($\cdot$)\\
        $\mathcal{P}, \mathcal{P}_r, \mathcal{P}_f$ & overall selection, range and additional filter predicates \\
        $x$ & range predicate column (set) \\
        $L, U$ & lower/upper bounds in range column \\
	$S$ & bag of sampled tuples \\
	$c, c_0$ & total cost and preprocessing factor\\
	$D$ & query range\\
	$k, \Delta k$ & number of strata and number of children of $D'$\\
	$\tau$ & stopping threshold for \texttt{Greedy}\\
	$\vec{m}$ & cumulative moment statistics\\
	$\vec{B}, \vec{C}$ & stratum boundaries and collect candidate boundaries\\
	$h, H$ & height of the LCA and the tree root\\
	$d$ & partition granularity\\
	\hline
    \end{tabular}
    }
\end{table}


In this section, we first provide an overview of problem formulation, and then
introduce necessary background on online aggregation style approximate
evaluation and sampling indexes. Table~\ref{tab:notations} lists the notations
in this paper. 

\noindent\textbf{Problem Formulation.} In this work, we study the problem of
reducing online aggregation style approximate query evaluation cost in an
index-assisted S-AQP system.  Without loss of generality, let $T$ be a
relational table and we consider a single-table approximate aggregation query
in the form of

\vspace{-10 pt}
\begin{equation}
    \label{eqn:original_query}
\mathcal{Q} = \gamma_{\texttt{SUM}(e)} \sigma_{\mathcal{P}}
T
\vspace{-2 pt}
\end{equation}

\noindent where $\mathcal{P}$ is a conjunctive predicate. It can be written as
$\mathcal{P}_r \wedge \mathcal{P}_f$. Here, $\mathcal{P}_r$ is a range
predicate $x \in [L, U)$ and there is a sampling index~\cite{hu14:_indep,
afshani_et_al:LIPIcs:2017:7859,olken93:_random,10.14778/3538598.3538606,10.1145/3626744}
over $x$ which can draw independent samples from $\sigma_{x \in [L, U)} T$
without scanning table $T$.  For the rest of the paper, we denote the
\textit{Independent Range Sampling (IRS)}
operation~\cite{10.1145/3517804.3526068} of sample size $n$ with replacement as
$\mathbf{\Gamma_n \sigma_{\mathcal{P}_r} T}$ -- i.e., $\Gamma_n$ is a sampling
operator that draws an i.i.d. random samples with some distribution (e.g., uniform/weighted) over
$\sigma_{\mathcal{P}_r} T$.  In addition, the sampling index also
appends to the output schema of $\sigma_{\mathcal{P}_r} T$ an additional
\textbf{probability column} $p$ to indicate each sample's probability.
The remaining part of the selection predicate
$\mathcal{P}_f$ is an additional filter on top of the range filter. Let $A$ be
the (unknown) answer of the exact query $\mathcal{Q}$. An approximate
evaluation of $\mathcal{Q}$ yields an unbiased estimation $\tilde{A}$ such that
$E[\tilde{A}] = A$, and a confidence bound $(\varepsilon, \delta)$ such that
$Pr\{\tilde{A} \in [A - \varepsilon, A + \varepsilon]\} \geq 1 - \delta$.
Typically, user specifies the desirable (half-width) confidence interval
$\varepsilon$ and confidence level $1 - \delta$, and the goal is to achieve the
desired confidence bound as fast as possible.

\noindent \textbf{Basic S-AQP Algorithms.}
To obtain an approximate answer of query $\mathcal{Q}$ is to evaluate its
approximate counterpart:
$$\tilde{\mathcal{Q}}^{(n)} =
\gamma_{\tilde{A},
\varepsilon} \Gamma_n\sigma_{\mathcal{P}_{r}} T$$
incrementally with increasing sample size $n$ until the derived confidence
interval $\varepsilon$ is smaller than user specification $\varepsilon_0$.  In
more details, the evaluation of $\tilde{\mathcal{Q}}^{(n)}$ works in rounds by invoking the
sampling index for obtaining a delta sample bag $\Delta \mathcal{S} =
\Gamma_{\Delta n}\sigma_{\mathcal{P}_{r}} T$. 
To derive the current estimator
and confidence interval after a certain round, we first denote the union of all
delta sample bags as $\mathcal{S} = \bigcup \Delta \mathcal{S}$, and the total
sample size so far as $n = \sum \Delta n$. Note that the estimator $\tilde{A}$ and the
confidence interval $\varepsilon$ both depend on $\mathcal{P}_f$ as it is not
evaluated by the sampling index, but we omit them when they are
unambiguous in the context to simplify the notations. Their
formulae~\cite{haas97:_large} are given below based on the Horvitz-Thompson
estimator~\cite{horvitz52} and the Lindeberge-L\'evy Central Limit
Theorem~\cite[Thm.~27.1]{alma991026726359704801}\footnote{Note that, the
computed half width of CI $\varepsilon$ is an estimation due to the fact that the true
estimator variance $\sigma^2$ is unknown and can only be estimated from sample
variance $\tilde{\sigma^2}$.}


\vspace{-5 pt}
\begin{equation}
    \label{eqn:individual_unbiased_estimator}
    \tilde{A}(t) = \begin{cases}
        e(t)/p(t) & \text{if } \mathcal{P}_f(t) \\
        0 & \text{o.w.}
    \end{cases}
\end{equation}
\begin{equation}
    \label{eqn:unbiased_estimator}
    \tilde{A} = \tilde{A}(\mathcal{S}) = \sum_{t \in \mathcal{S}}
    \tilde{A}(t) / n
\end{equation}
\begin{equation}
\label{eqn:ci_bound_classic}
\varepsilon = \frac{Z_\delta \sigma}{\sqrt{n}}, Z_\delta = \sqrt{2} \texttt{erf}^{-1}(1 - \delta)
\end{equation}
\begin{equation}
\label{eqn:stddev_estimation}
\sigma \approx \tilde{\sigma} = 
    \sqrt{\frac{\sum_{t\in\mathcal{S}}{(\tilde{A}(t) - \tilde{A}})^2}{n-1}}
\end{equation}

\noindent \textbf{Sampling Indexes.}
All existing sampling indexes~\cite{hu14:_indep,
afshani_et_al:LIPIcs:2017:7859,olken93:_random,10.14778/3538598.3538606,10.1145/3626744}
are tree indexes. In general, the cost of these sampling
indexes for drawing $k$ independent random records from a table of size $N$
can modeled as $c^{pre}(N) + k\cdot c^{samp}(N)$, where $c^{pre}(N)$ and
$c^{samp}(N)$ are two sample-index dependent factors bounded by $O(\log N)$.

\begin{figure}[t]
  \centering
  \includegraphics[width=0.6\linewidth]{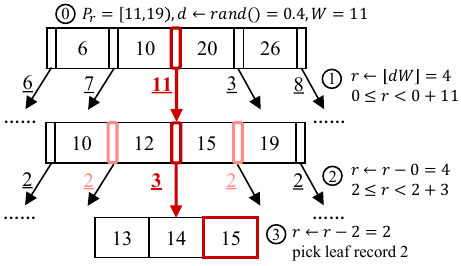}
  \vspace{-10pt}
  \caption{Sampling from an aggregate B-tree. Underscored numbers are aggregate weights for child pointers.}
  \vspace{-7pt}
  \label{fig:sampling_index}
\end{figure}

In this work, we consider the classic aggregate B-tree first described by
Olken~\cite{olken93:_random} because it is the only one with open-source DBMS
integration to the best of our knowledge~\cite{10.14778/3538598.3538606}.
Nevertheless, most of analysis below can adapt to other sampling indexes with
plugging in their specific cost models. In aggregate B-tree, each leaf record
is assigned an unnormalized weight such that the sampling probability for each
leaf record is proportional to its weight -- for uniform sampling, the weight
of each leaf record is simply $1$.  It also has each of its
internal node child pointer annotated with the aggregate weight of the subtree
it points to.

The original Olken's approach involves performing random selection of child
pointers at each level proportional to the aggregate weights, incurring an
$O(\log N)$ per-sample cost. \cut{This, however, requires generating $\log N$
random numbers, which could be expensive in real systems.} Instead,
\cite{10.14778/3538598.3538606} describes a slightly different procedure
(Figure~\ref{fig:sampling_index}), which has the same complexity but only
requires generating one random number per sample. Sampling $k$ records from a
range starts with a preprocessing stage to find the left-most and right-most
paths leading to the first and last records in the range, and compute the total
weight $W$ between these two paths. This takes $c_{pre}(N) = O(\log N)$ time.
Then, for each sample to draw, it draws a random number $d$ in $[0, 1)$, and
performs a weight-guided descent in the tree to identify the $\lfloor dW
\rfloor^{\textrm{th}}$ leaf record in the range: It first sets the residual
weight as $r = \lfloor dW \rfloor$. Starting from root, it computes the prefix
sum of the aggregate weights associated with the child pointers and find the
last one such that $r$ is no larger than the prefix sum of aggregate weights,
which is guaranteed to contain the $\lfloor DW \rfloor^{\textrm{th}}$ leaf
record in the query range.  Then it descends down into that subtree with the
residual weight $r$ subtracted by the prefix sum of aggregate weights of the
left sibling of the chosen child pointer.  This results in a cost of
$c^{sample}(N) = O(\log N)$ per sample, but as we will show later, this
modified procedure can help with additional cost reduction for stratified
sampling. Note that we assume each node visited incurs a non-negligible
constant cost, including I/O cost, locking overhead, and the linear cost for
searching for the child node pointing to the subtree that contains the $\lfloor dW
\rfloor^{\textrm{th}}$ leaf record.



%
%
%
%
%
%

\section{Optimal Stratification Strategies}
\label{sec:analysis}

This section provides a theoretical foundation of \sysname -- how we can derive
the optimal stratification strategies assuming we can estimate statistics of
the data. Towards that, we first describe how index-assisted stratified
sampling works and its cost model. Then we show the optimal sample size
allocation through a modified Neyman allocation assuming strata are fixed.
Finally, we prove how we can optimally determine the strata given full statistics.

\subsection{Online aggregation with index-assisted stratified sampling and its cost model}

Recall that the range predicate in the query $\mathcal{P}_r$ is in the form of $x
\in [L, U)$ and we denote $D = [L, U)$. Then online aggregation
with stratified sampling works as follows:
\begin{enu} 
    \item We \emph{logically} partition $D$ into two or more \emph{non-empty}
        disjoint ranges $D = D_1 \uplus D_2 \ldots \uplus D_k$. The disjoint
        ranges $D_1, \ldots D_k$ are also known as \textbf{strata}.
    \item The original query can be rewritten as $\mathcal{Q} =
        \gamma_{SUM}\bigcup_{i = 1}^k \mathcal{Q}_i$, where each
        subquery $\mathcal{Q}_i $ is the partial aggregation over the subrange
        $D_i$: $\mathcal{Q}_i = \gamma_{\texttt{SUM}(e)}\sigma_{x \in
        D_i \wedge \mathcal{P}_f} T$.
    \item We perform $k$ independent approximate evaluations of the subqueries
        $\tilde{\mathcal{Q}}_1^{(n_1)}, \ldots \tilde{\mathcal{Q}}_k^{(n_k)}$
        with certain sample sizes $n_1, \ldots n_k$, and aggregate all
        estimators and confidence bounds to derive the final estimation using
        Equations~\ref{eqn:combined_estimator} and~\ref{eqn:combined_ci} below.
        In other words, $\tilde{\mathcal{Q}}' = \gamma_{\tilde{A}',
        \varepsilon'}\bigcup_{i = 1}^k \tilde{\mathcal{Q}}_i^{(n_i)}$.
\end{enu}
\vspace{-10 pt}
\begin{equation}
    \label{eqn:combined_estimator}
    \tilde{A}' = \sum_{i = 1}^k \tilde{A}_i
\end{equation}
\vspace{-2 pt}
\begin{equation}
    \label{eqn:combined_ci}
    \varepsilon' = Z_\delta \sqrt{\sum_{i=1}^k\frac{\sigma_i^2}{n_i}} = \sqrt{\sum_{i=1}^k \varepsilon_i^2}
 \vspace{-2 pt}
\end{equation}

\noindent where $\tilde{A}_i$ and $\varepsilon_i$ are the unbiased estimator and half-width
of confidence interval produced by $\tilde{\mathcal{Q}}_i$.
We also denote the estimated estimator variance of $\mathcal{Q}_i$ as $\sigma_i^2$.
Then $\tilde{A'}$ remains an unbiased estimation of $A$ due to the linearity
of expectation, and $\varepsilon'$ remains to the estimated half-width of
confidence interval of $\tilde{A'}$ such that $Pr\{\tilde{A}' \in [A -
\varepsilon, A + \varepsilon]\} \geq 1 - \delta$, by modeling the distribution
of $\tilde{A}'$ as the sum of $k$ independent normal variables $\tilde{A}'_1,
\ldots, \tilde{A}'_k$. We omit the proofs and refer interested readers to
our supplemental materials.

To model the cost of the query with sample size $n_1, \ldots, n_k$, an obvious
bound is linear to the number of strata and the total number of samples taken
$c = c^{pre}(N) \cdot k + c^{sample}(N)\sum_{i=1}^k n_i$. However, we argue
this is often a loose overestimation in practice due to a simple implementation
trick of the sampling procedure below: When
sampling from a subrange $D_i = [L_i, U_i)$, we can first find the left-most
and right-most paths leading to the first and last record in $D_i$ with two
regular tree search descents from root.  Let the lowest common ancestor of the
two paths be $LCA$ and let the height of the $LCA$ and the tree root be $h$ and
$H$. Then, we know any leaf record in $D_i$ must have a unique ancestor node
for levels above $h$.  Thus, in the sampling algorithm, we can simply skip
visiting any node above level $h$ because there is no random choice to make,
which results in a lower upper bound of per-sample cost of $\frac{h}{H} \cdot
c^{sample}(N)$. Taking Figure~\ref{fig:sampling_index} as an example,
the LCA for the left-most and right-most paths for query range $[11, 19)$ is
the middle node at height 2. To draw samples from this range,
instead of starting the weight-guided descent from the root at height 3, which
can make no other choices but to descend down into the middle node, we can always
start from the middle node. This reduces the per-sample cost from $3$ to $2$.

There are two reasons why we have to consider the cost: (1) As strata in
stratified sampling are often much smaller than the entire key range of the
range column $x$, $h$ is often much lower than the entire tree height, this can
result in significantly lowered cost. (2) As different strata may have LCA at
different height, the per-sample cost in different stratum can also vary
significantly.
For example, consider the flight cancellation count query in
Figure~\ref{fig:flight_sample}. Suppose each day has $200k$ scheduled flights
and we have $20$ years of data. The total tree height of a sampling index with
fanout $50$ is $6$ -- thus per-sample cost is bounded by $6$. Suppose we are
querying the 20-day cancellation count since 9/11/2021. As there are
significantly more cancellation on 9/11 than others, we might want to create
two strata 9/11, and 9/12 - 9/30. With the same tree fanout and number of
scheduled flights, the first stratum will have a per-sample cost of $\log_{50}
100K \approx 3$, while the second stratum will have a per-sample cost of
$\log_{50} 1.9M \approx 4$. Both are much smaller than sampling in the entire
tree, and also differ by $1/3$ from each other. Hence, we must take height into
account when modeling the cost.

Thus, with an appropriately re-scaled constant $c_0 = c^{sample}(N)/H$
(\textbf{preprocessing factor}), we model the total cost of index-assisted stratified sampling for
$k$ strata with sample sizes $n_1, \ldots n_k$ as:
\vspace{-5pt}
\begin{equation}
    \label{eqn:cost_model}
    c = c_0 k + \sum_{k=1}^k n_i h_i
\end{equation}
\vspace{-15pt}

\subsection{Optimal sample size allocation via modified Neyman allocation
given fixed strata}
\label{sec:neyman_allocation}

\begin{lemma}
    \label{lem:neyman_allocation}
    (\textbf{Neyman allocation}) Given a set of strata $D = D_1 \uplus \ldots \uplus
    D_k$ and the estimator variances $\sigma_1^2, \ldots \sigma_k^2$ within
    each stratum, the minimum total sample size for approximate query
    $\tilde{\mathcal{Q}}'$ to reach a $(\varepsilon, \delta)$ confidence bound is $n' =
    \frac{Z_\delta^2}{\varepsilon^2}(\sum_{i=1}^k \sigma_i)^2$, where the
    sample size for each stratum is $n_i = \frac{Z_\delta^2}{\varepsilon^2}
    (\sum_{i=1}^k \sigma_i)\sigma_i$.
\end{lemma}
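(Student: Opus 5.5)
We reduce the lemma to a constrained minimization over the sample sizes. By Equation~\ref{eqn:combined_ci}, the stratified query $\tilde{\mathcal{Q}}'$ reaches an $(\varepsilon,\delta)$ confidence bound exactly when
$$\sum_{i=1}^k \frac{\sigma_i^2}{n_i} \le \frac{\varepsilon^2}{Z_\delta^2}.$$
So the task is to minimize $\sum_i n_i$ over positive (relaxed to real) $n_i$ subject to this constraint. The objective is increasing in each $n_i$, while the left side of the constraint is decreasing in each $n_i$. Hence at any optimum the constraint holds with equality: otherwise we could shrink some $n_i$ slightly and stay feasible.

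\textbf{Lower bound.} The cleanest route is Cauchy--Schwarz rather than Lagrange multipliers, since it gives the bound and the equality case at once. Write $\sigma_i = \frac{\sigma_i}{\sqrt{n_i}}\cdot\sqrt{n_i}$. Then
$$\Big(\sum_i \sigma_i\Big)^2 \le \Big(\sum_i \frac{\sigma_i^2}{n_i}\Big)\Big(\sum_i n_i\Big) \le \frac{\varepsilon^2}{Z_\delta^2}\sum_i n_i .$$
Rearranging, every feasible allocation satisfies
$$\sum_i n_i \ge \frac{Z_\delta^2}{\varepsilon^2}\Big(\sum_i\sigma_i\Big)^2 = n'.$$

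\textbf{Attainment.} Equality in Cauchy--Schwarz holds iff $\sqrt{n_i}$ is proportional to $\sigma_i/\sqrt{n_i}$, i.e.\ $n_i=\lambda\sigma_i$ for some $\lambda>0$. Substituting into the tight constraint gives $\sum_i \sigma_i/\lambda = \varepsilon^2/Z_\delta^2$, so
$$\lambda=\frac{Z_\delta^2}{\varepsilon^2}\sum_j\sigma_j,$$
which is exactly the stated $n_i$. A direct check confirms the claim: this allocation makes the constraint tight and sums to $n'$. It therefore attains the lower bound and is optimal.

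\textbf{Remaining issues.} The main obstacle is not the algebra but the modeling. First, sample sizes are integers, so the optimum should be read as the continuous relaxation, with rounding up costing at most $k$ extra samples. Second, strata with $\sigma_i=0$ would get $n_i=0$; these must be handled as needing zero or one sample, which does not affect the formula. I would state both caveats briefly. I would also note that $\sigma_i$ are treated as known per-stratum standard deviations, as the lemma assumes.
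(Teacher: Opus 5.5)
Your proof is correct, but it takes a different route from the paper's. The paper treats this lemma as the classical Neyman allocation: it cites the result rather than reproving it. For the cost-weighted version (Lemma~\ref{lem:modified_neyman_allocation}) it says the result follows from ``the standard Lagrange multiplier method.'' That method sets the derivative of $\sum_i n_i + \mu(\sum_i \sigma_i^2/n_i - \varepsilon^2/Z_\delta^2)$ to zero, which gives $n_i = \sqrt{\mu}\,\sigma_i$, and then fixes $\mu$ from the tight constraint.

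On its own, that computation only yields a stationary point. Concluding that it is the global minimum needs one more step, which the paper leaves implicit. For example: the feasible set $\{n > 0 : \sum_i \sigma_i^2/n_i \le \varepsilon^2/Z_\delta^2\}$ is convex because $1/n_i$ is convex, so a KKT point of a linear objective is globally optimal.

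Your Cauchy--Schwarz argument avoids this. The inequality $(\sum_i \sigma_i)^2 \le (\sum_i \sigma_i^2/n_i)(\sum_i n_i)$ is a lower bound for every feasible allocation. Its equality case then both identifies the optimizer and certifies that it is optimal. For the same reason, your preliminary remark that the constraint must be tight at an optimum is harmless but unnecessary.

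The Lagrange route is more mechanical, which is presumably why the paper reuses it for the modified allocation. Your method transfers just as directly, though. Pairing $\sigma_i/\sqrt{n_i}$ with $\sqrt{n_i h_i}$ gives $(\sum_i \sigma_i\sqrt{h_i})^2 \le (\sum_i \sigma_i^2/n_i)(\sum_i n_i h_i)$, with equality iff $n_i \propto \sigma_i/\sqrt{h_i}$. That is exactly Lemma~\ref{lem:modified_neyman_allocation}.

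Your caveats about the continuous relaxation and zero-variance strata are appropriate and go a bit beyond what the paper states. More precisely, when some $\sigma_i = 0$, the value $n'$ is an infimum over positive $n_i$ rather than an attained minimum.
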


Neyman allocation~\cite{10.1145/1242524.1242526} (Lemma~\ref{lem:neyman_allocation})
provides the optimal
sample size allocation with respect to minimizing total sample size given a set
of strata. However, it fails to consider the different sampling cost in
each stratum, which can vary a lot as we discussed. Hence, we minimize the total cost rather than the total
sample size below. Note that here all strata are fixed so $k$ and $\sigma_i^2$
are constants.
\vspace{-5pt}
\begin{align*}
    \textrm{minimize } & c_0k + \sum_{i = 1}^k n_i\cdot h_i \\
    \textrm{subject to } & \varepsilon = Z_\delta \sqrt{\sum_{i=1}^k\frac{\sigma_i^2}{n_i}}
\vspace{-5 pt}
\end{align*}

Solving for this using the standard Lagrange multiplier method results in
Lemma~\ref{lem:modified_neyman_allocation}, which we call modified Neyman
allocation.  We omit the proof for brevity, which is available in the
supplemental materials. Note that, technically, $n_1, \ldots, n_k$ must be
integers, but the difference between the optimal and the rounded up $n'$ is
very small (bounded by $k$). Even if we round up all $n_i$'s, it only incurs
negligible increase of ultimate cost.

\begin{lemma}
    \label{lem:modified_neyman_allocation}
    (\textbf{Modified Neyman allocation}) Given a set of strata $D = D_1 \uplus \ldots \uplus
    D_k$ and the estimator variances $\sigma_1^2, \ldots \sigma_k^2$ within
    each stratum, the minimum index-assisted sampling cost for approximate query
    $\tilde{\mathcal{Q}}'$ to reach a $(\varepsilon, \delta)$ confidence bound is
    $c = c_0k + 
	\frac{Z_\delta^2}{\varepsilon^2}(\sum_{i=1}^k\sigma_i\sqrt{h_i})^2$, where the
    sample size for each stratum is $n_i = \frac{Z_\delta^2}{\varepsilon^2}
    (\sum_{i=1}^k \sigma_i\sqrt{h_i})\frac{\sigma_i}{\sqrt{h_i}}$.
\end{lemma}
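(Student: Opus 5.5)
Since $k$, the $\sigma_i$ and the $h_i$ are all fixed, the term $c_0 k$ is a constant. The problem therefore reduces to minimizing $\sum_{i=1}^k n_i h_i$ over positive reals $n_i$, subject to $\sum_{i=1}^k \sigma_i^2/n_i = \varepsilon^2/Z_\delta^2 =: V$. The plan is to solve this continuous problem exactly and then substitute back; integrality is handled by the rounding remark preceding the lemma.

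The cleanest route is the Cauchy--Schwarz inequality, which gives the lower bound and the minimizer in one step. For any feasible $(n_1,\ldots,n_k)$,
\[
\Bigl(\sum_{i=1}^k \sigma_i\sqrt{h_i}\Bigr)^2
= \Bigl(\sum_{i=1}^k \sqrt{n_i h_i}\cdot \frac{\sigma_i}{\sqrt{n_i}}\Bigr)^2
\le \Bigl(\sum_{i=1}^k n_i h_i\Bigr)\Bigl(\sum_{i=1}^k \frac{\sigma_i^2}{n_i}\Bigr)
= V\sum_{i=1}^k n_i h_i .
\]
Hence every feasible allocation satisfies
\[
\sum_{i=1}^k n_i h_i \ge \frac{Z_\delta^2}{\varepsilon^2}\Bigl(\sum_{i=1}^k\sigma_i\sqrt{h_i}\Bigr)^2 .
\]
Equality in Cauchy--Schwarz holds exactly when $\sqrt{n_i h_i}$ is proportional to $\sigma_i/\sqrt{n_i}$, that is, when $n_i = \lambda\,\sigma_i/\sqrt{h_i}$ for some $\lambda>0$.

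Next I fix $\lambda$ from the constraint. Substituting $n_i=\lambda\sigma_i/\sqrt{h_i}$ gives $\sum_i \sigma_i\sqrt{h_i}/\lambda = V$, so $\lambda = \frac{Z_\delta^2}{\varepsilon^2}\sum_j \sigma_j\sqrt{h_j}$. This reproduces the stated $n_i$, and then $\sum_i n_i h_i = \lambda\sum_i\sigma_i\sqrt{h_i}$ equals the lower bound, so adding $c_0k$ yields the stated cost. As a sanity check, setting all $h_i = 1$ recovers Lemma~\ref{lem:neyman_allocation}.

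Alternatively, one could follow the Lagrange multiplier approach the paper mentions. Stationarity of $\sum_i n_i h_i + \mu(\sum_i \sigma_i^2/n_i - V)$ gives $h_i = \mu\sigma_i^2/n_i^2$, i.e.\ the same form $n_i\propto \sigma_i/\sqrt{h_i}$. That route, however, requires a separate argument that the stationary point is a global minimum, which follows from convexity of $\sigma_i^2/n_i$ on $n_i>0$ together with linearity of the objective.

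The main obstacle is justifying global optimality rather than mere stationarity, and the Cauchy--Schwarz argument avoids it entirely. Two side issues remain. First, strata with $\sigma_i=0$ should receive $n_i=0$, since they contribute nothing to the variance. Second, the solution ignores integrality of the $n_i$, which the paper has already addressed by noting that rounding up changes the cost only negligibly.
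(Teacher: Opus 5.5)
Your proof is correct, but it takes a different route from the paper. The paper's only justification is that the lemma follows from ``the standard Lagrange multiplier method,'' with details deferred to supplemental material. That means stationarity of the Lagrangian gives $h_i = \mu\sigma_i^2/n_i^2$, hence $n_i \propto \sigma_i/\sqrt{h_i}$, and the multiplier is then fixed by the confidence-interval constraint.

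You instead apply Cauchy--Schwarz to the factorization $\sigma_i\sqrt{h_i} = \sqrt{n_i h_i}\cdot\sigma_i/\sqrt{n_i}$. This gives the lower bound $\sum_i n_i h_i \ge \frac{Z_\delta^2}{\varepsilon^2}\bigl(\sum_i\sigma_i\sqrt{h_i}\bigr)^2$ for every feasible allocation, and the equality case identifies the minimizer. What your route buys is a one-line certificate of global optimality, which a bare stationarity computation does not provide. It also works unchanged for the more natural constraint $Z_\delta\sqrt{\sum_i\sigma_i^2/n_i}\le\varepsilon$, i.e., reaching the bound rather than hitting it exactly, because the last step of your chain simply becomes an inequality in the right direction.

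The Lagrange route is more mechanical, and it extends more readily to cost models that are not linear in $n_i$, where no neat Cauchy--Schwarz pairing is available.

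One small correction to your aside on the Lagrange route: with the equality constraint, the feasible set $\{\sum_i\sigma_i^2/n_i = V\}$ is not convex. So ``convexity plus a linear objective'' does not by itself make the stationary point a global minimum. You would first relax to the convex set $\{\sum_i\sigma_i^2/n_i \le V\}$. Then note that the optimum lies on its boundary, because the objective is strictly increasing in each $n_i$ (as $h_i>0$). Only then does KKT sufficiency apply. This does not affect your main argument.
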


\subsection{Optimal stratification given full statistics}

Finally, we consider the optimal stratification, assuming we will apply the
modified Neyman allocation and we are able to know the estimator variances and
possible candidate partition boundaries. Plugging in the $n_i$'s from
Lemma~\ref{lem:modified_neyman_allocation} into the cost function
(Equation~\ref{eqn:cost_model}), we essentially need to minimize the function
below (Equation~\ref{eqn:cost_optimization_2}). Note that the free variables in
this function are the number of strata $k$, and the stratum boundaries, which
impacts $h_i$ and $\sigma_i$.

\vspace{-15 pt}
\begin{align}
    \label{eqn:cost_optimization_2}
    c_{opt} = c_0k + (\sum_{i = 1}^k \sigma_i\sqrt{h_i})^2
\end{align}


The first term $c_0k$, which corresponds to preprocessing cost, obviously increases when there's a finer stratification.
If we can show that the minimal of the second squared summation term, which we denote
as $c_{opt}' = (\sum_{i = 1}^k \sigma_i\sqrt{h_i})^2$ and represents the squared cost
of actual drawing samples, over all possible
$k$-stratification only
decreases as $k$ increases, then $c_{opt}$ is a V-shaped function with a global
minimum. Intuitively, this is true -- the finer stratification is, the smaller
the second summation term of $c_{opt}$ will be, for having lower intra-stratum
estimator variance and lower tree height of the LCA for each stratum. To
formalize that, we argue that Theorem~\ref{thm:partition_is_better_general} is
true. We show a proof sketch below.

\begin{theorem}
    \label{thm:partition_is_better_general}
    \vspace{-4pt}
    Let $n$ be all the possible stratification boundary keys (i.e., distinct keys in the query range).
    For any $k$ disjoint strata of the range $D = \uplus_{i = 1}^k D_i$,
    where $D_1, \ldots, D_k$ are non-empty, 
    $c_{opt}' = (\sum_{i = 1}^k \sigma_i\sqrt{h_i})^2$ for index-assisted
    stratified sampling in $\tilde{\mathcal{Q}}'$ is no larger than that with
    $k - 1$ disjoint strata $\forall 1 \leq i' < i'' \leq n, D = \uplus_{1
    \leq i \leq n \wedge i \neq i' \wedge i \neq i''} D_i \uplus D_{i', i''}$,
    where $D_{i', i''} = D_{i'} \uplus D_{i''}$ is treated as a single
    stratum.
    \vspace{-4pt}
\end{theorem}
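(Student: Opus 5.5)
The plan is to reduce the theorem to a single pairwise inequality and then prove that inequality term by term. The terms of $c_{opt}'$ for strata other than $D_{i'}$ and $D_{i''}$ appear unchanged in both stratifications, and all terms are nonnegative. Since squaring is monotone on $[0,\infty)$, it therefore suffices to show
\[
\sigma_{i',i''}\sqrt{h_{i',i''}} \;\geq\; \sigma_{i'}\sqrt{h_{i'}} + \sigma_{i''}\sqrt{h_{i''}},
\]
where $\sigma_{i',i''}$ and $h_{i',i''}$ are the estimator standard deviation and LCA height of the merged stratum $D_{i',i''}$. I would prove this by bounding the height factor and the standard-deviation factor separately:
\[
h_{i',i''}\ge\max(h_{i'},h_{i''}) \quad\text{and}\quad \sigma_{i',i''}\ge\sigma_{i'}+\sigma_{i''}.
\]
Together these give $\sigma_{i',i''}\sqrt{h_{i',i''}}\ge \sigma_{i'}\sqrt{h_{i''}'}$-free form $\sigma_{i'}\sqrt{\max(h_{i'},h_{i''})}+\sigma_{i''}\sqrt{\max(h_{i'},h_{i''})}$, which dominates the right-hand side.

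\textbf{Height step.} The first and last records of $D_{i',i''}$ are extreme over a superset of the records of each part. Hence the LCA of the left-most and right-most paths for the merged range is an ancestor of (or equal to) the LCA for each part. This holds for the tree of the aggregate B-tree regardless of adjacency, so the height can only grow: $h_{i',i''}\ge\max(h_{i'},h_{i''})$.

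\textbf{Standard-deviation step (the main obstacle).} I would make $\sigma$ explicit using Equation~\ref{eqn:individual_unbiased_estimator} under uniform sampling within a stratum. Write $y(t)=e(t)$ if $\mathcal{P}_f(t)$ holds and $0$ otherwise. For stratum $j$ with $N_j$ tuples, let $S_j=\sum y$ and $Q_j=\sum y^2$. Each sample contributes $N_j y(t)$, so
\[
\sigma_j^2 = N_jQ_j - S_j^2 .
\]
For the merged stratum, with $N=N_1+N_2$,
\[
\sigma^2 = N(Q_1+Q_2) - (S_1+S_2)^2 .
\]
Subtracting the two single-stratum variances leaves $\sigma^2-\sigma_1^2-\sigma_2^2 = N_2Q_1+N_1Q_2-2S_1S_2$. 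So the goal $\sigma\ge\sigma_1+\sigma_2$ becomes
\[
N_2Q_1+N_1Q_2-2S_1S_2 \;\ge\; 2\sigma_1\sigma_2 .
\]
I would prove this in two moves.
\begin{enumerate}
\item By AM--GM, $N_2Q_1+N_1Q_2\ge 2\sqrt{XY}$, where $X=N_1Q_1$ and $Y=N_2Q_2$. It then suffices to show $\sqrt{XY}-S_1S_2\ge\sqrt{(X-S_1^2)(Y-S_2^2)}$.
\item By Cauchy--Schwarz, $X\ge S_1^2$ and $Y\ge S_2^2$. This gives $\sqrt{XY}\ge|S_1S_2|$, so the left side is nonnegative and both sides can be squared. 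After cancelling, the squared inequality reduces to $XS_2^2+YS_1^2\ge 2\sqrt{XY}\,|S_1S_2|$, which is AM--GM again. This is an Aczél-type reverse Cauchy--Schwarz inequality.
\end{enumerate}

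Combining the two steps yields the pairwise inequality and hence the theorem. The height step is essentially structural. The delicate part is the variance superadditivity: in particular, checking the sign condition before squaring, and noting that it relies on the Horvitz--Thompson scaling by $N_j$; for weighted sampling I would expect to redo it with $1/p$ weights in place of $N_j$.
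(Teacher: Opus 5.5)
Your proof is correct and follows the paper's decomposition exactly. You reduce to the merged pair, and you get $h_{i',i''}\ge\max(h_{i'},h_{i''})$ because the merged LCA is an ancestor of each part's LCA. You then prove $\sigma_{i',i''}\ge\sigma_{i'}+\sigma_{i''}$ from the uniform-sampling Horvitz--Thompson variance $\sigma_j^2=N_jQ_j-S_j^2$, which is the paper's $(\alpha_iN)^2[\cdots]$ expression rewritten.

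You differ only in the step the paper omits, which it says follows from ``Jensen's inequality twice.'' The natural reading is a within/between split followed by convexity of the square. Write $\bar y_j=S_j/N_j$, $\bar y=S/N$ and $V_j=Q_j/N_j-\bar y_j^2$, with $j$ ranging over the two parts. Then $\sigma_{i',i''}^2=N\sum_j N_jV_j+N\sum_j N_j(\bar y_j-\bar y)^2\ge N^2\sum_j \frac{N_j}{N}V_j\ge \bigl(\sum_j N_j\sqrt{V_j}\bigr)^2=\bigl(\sum_j\sigma_j\bigr)^2$.

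You instead prove the two-stratum inequality $N_2Q_1+N_1Q_2-2S_1S_2\ge 2\sigma_1\sigma_2$ directly, using AM--GM and an Aczél-type reverse Cauchy--Schwarz step. The algebra checks out, including the nonnegativity of $\sqrt{XY}-S_1S_2$ before squaring.

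Your version is completely explicit where the paper is not. The Jensen route is shorter and covers merging any number of strata at once. It also shows how the weighted-sampling case the paper mentions goes: replace the weights $N_j/N$ by the strata's probability masses, and the $N_j$-scaling by $1/p$.

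One cosmetic fix: your sentence ``Together these give \ldots'' is garbled. The intended chain is $\sigma_{i',i''}\sqrt{h_{i',i''}}\ge(\sigma_{i'}+\sigma_{i''})\sqrt{\max(h_{i'},h_{i''})}\ge\sigma_{i'}\sqrt{h_{i'}}+\sigma_{i''}\sqrt{h_{i''}}$.
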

\begin{proof}
    Without loss of generality, we only need to prove so for $i' = 1$ and
    $i'' = 2$. It suffices to show that $\sigma_l \sqrt{h_1} + \sigma_2
    \sqrt{h_2} \leq \sigma_{1,2}\sqrt{h_{1,2}}$, where $\sigma_{1,2}^2$ and
    $h_{1,2}$ represent the estimator variance and LCA in sampling tree for
    sampling in stratum $D_{1, 2}$.

\begin{figure}[t]
  \centering
  \includegraphics[width=0.7\linewidth]{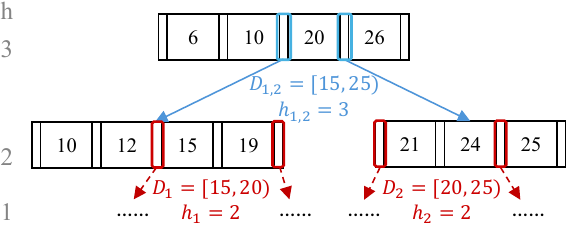}
  \vspace{-10 pt}
  \caption{Union of two strata has a higher LCA height than either stratum,
    resulting in higher per-sample cost}
  \label{fig:tree_height_is_larger_for_coarser_stratum}
  \vspace{-2pt}
\end{figure}
\begin{figure}[t]
    \centering
    \vspace{-10pt}
    \includegraphics[width=0.85\linewidth]{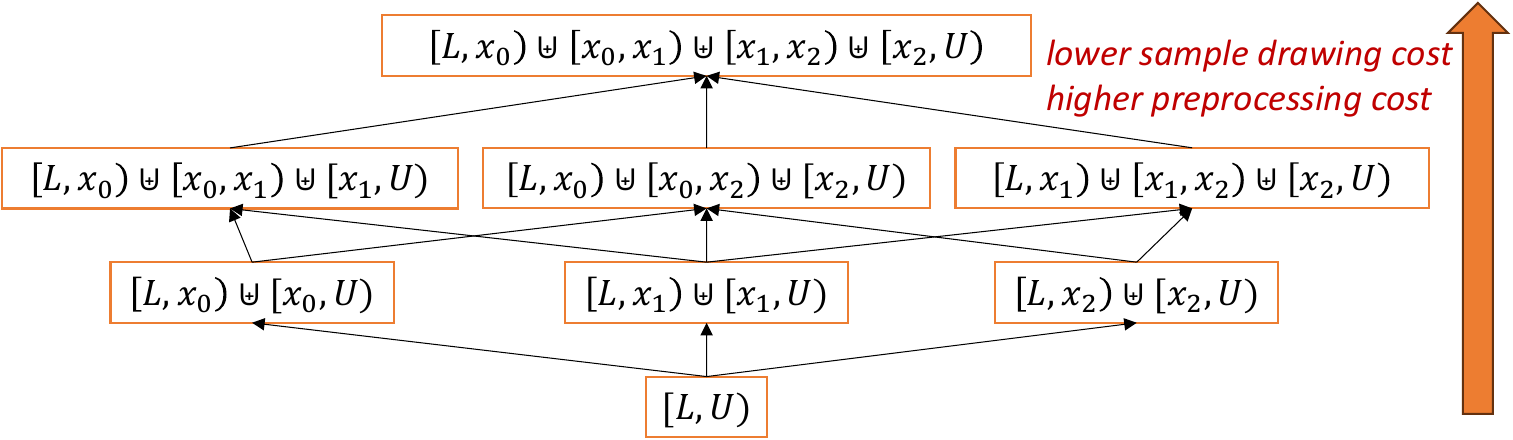}
	\vspace{-12pt}
    \caption{Possible stratification forms a lattice with
    decreasing sample drawing cost and higher preprocessing cost}
    \vspace{-5pt}
    \label{fig:lattice}
\end{figure}

    We first note that, the LCA of stratum $D_{1, 2}$ must be at the same
    height or higher height of the LCAs of $D_1$ and $D_2$, because $D_{1, 2}$
    covers a larger range (see
    Figure~\ref{fig:tree_height_is_larger_for_coarser_stratum}). Thus, we have
    $h_{1, 2} \geq h_1$ and $h_{1, 2} \geq h_2$.  Thus, it suffices to prove
    that $\sigma_1 + \sigma_2 \leq \sigma_{1, 2}$.

    Then, we can analytically compute the estimator variance below for
    stratum $i \in \{1, 2\}$, where $\tilde{A_i}(t)$ is the individual
    estimator variance defined in
    Equation~\ref{eqn:individual_unbiased_estimator}. Here we denote the
    fraction of tuples in stratum i out of both strata as $\alpha_i$, the
    total number of tuples in both strata as $N$ and assume uniform distribution
    as the sampling distribution:
    \vspace{-5 pt}
    \begin{align*}
        \sigma_i^2 &= E[\tilde{A_i}(t)^2] - E^2[\tilde{A_i}(t)] \\
        &= (\alpha_i N)^2 [
            \sum_{t \in D_i} \frac{(e(t)\mathcal{P}_f(t))^2}{\alpha_i N} - 
            (\sum_{t \in D_i} \frac{e(t)\mathcal{P}_f(t)}{\alpha_i N})^2]
    \end{align*}
    \vspace{-5pt}

    Then it is easy to prove $\sigma_1 + \sigma_2 \leq \sigma_{1, 2}$ by applying
    Jensen's inequality twice (omitted for space). In addition,
    if the sampling distribution is not uniform, we can apply the generalized
    Jensen's inequality instead, which provides the same conclusion.
\end{proof}

Interesting, this gives rise to a lattice structure (Figure~\ref{fig:lattice}) where finer stratification
that breaks a coarser stratum into two finer ones always leads to lowered
sampling drawing cost (i.e., second term $c_{opt}'$). Thus, one can either
holistically optimize $c_{opt}$ with increasing $k$ from bottom up, which is
the basis of the dynamic programming algorithm \texttt{CostOpt} in the next section,
or start from a single query range and gradually break it down
into smaller strata, which is the greedy algorithm in the next section.
The main difference is the trade-off between optimization
overhead and cost reduction. Top-down strategy often has
lower optimization cost but possibly lower reduction in $c_{opt}'$ while
bottom-up strategy has higher optimization cost but more reduction in
$c_{opt}'$.

\section{\sysname design and implementation}
\label{sec:algorithm}

In this section, we describe the design and implementation of \sysname based on
the previous theoretical analysis. As we need to know the statistics and
candidate partition keys for stratification, we first design a two-phase
index-assisted approximate query evaluation framework, where the first phase
draws initial samples using the available sampling index to compute an optimized
stratification for the second phase of index-assisted stratified sampling.
Then, we provide 4 different stratification optimization
methods with varying trade-offs between optimization overhead and the
aggressiveness of optimization. In the end, we describe our implementation
details in PostgreSQL with an aggregate B-tree implementation,
AB-tree~\cite{10.14778/3538598.3538606}.

\subsection{Two-phase index-assisted approximate query evaluation framework}


\begin{algorithm}[t]
    \small
\caption{\small Two-phase index-assisted approx. query evaluation}
\label{algo:pswr_overview}
    \KwIn{$\mathcal{Q}$: original query;\\
    $\varepsilon, 1 - \delta$: desired half-width of confidence interval and level}
    $n_{0}$: initial sample budget; \\
\KwOut{periodic unbiased estimators $\tilde{A}$ and half-width of confidence intervals $\varepsilon$}
    \Comment{Phase 0} 
    $\mathcal{S}_0 \gets \gamma_{n_0} \sigma_{\mathcal{P}_r} T$ \;
    $\vec{B}, \vec{\sigma}, \vec{h} \gets \texttt{optimize}(\mathcal{S}_0)$ \label{line:optimize}\;

    $\tilde{A}_0, \varepsilon_0 \gets \tilde{A}(\mathcal{S}_0), \varepsilon(\mathcal{S}_0)$\;
    Output $\tilde{A}_0, \varepsilon_0$ \;

    \Comment{Phase 1}
    $S \gets \phi$; $n \gets 0$\;
\While {$\varepsilon_1 \geq \varepsilon$}{
    Compute next sample batch size and sample size allocation  $\Delta n, \vec{n_i}$ based on modified Neyman allocation \label{line:invoke_neyman_allocation}\;
    $n \gets n + \Delta n$ \;
    $S \gets S \cup \bigcup_{1 \leq j \leq k} \Gamma_{n_j}\sigma_{x \in [B_j, B_{j+1})} T$\;
    $\tilde{A}_1, \varepsilon_1 \gets \tilde{A}'(\mathcal{S}), \varepsilon'(\mathcal{S})$\;
    Output 
    $\tilde{A} = \frac{n_0 \tilde{A}_0 + n \tilde{A}_1}{n_0 + n}$ and
        $\varepsilon = \frac{n_0^2 \varepsilon_0 + n^2
        \varepsilon_1}{(n_0 + n)^2}$ \label{line:combined_estimator}\;
}
\end{algorithm}

Algorithm~\ref{algo:pswr_overview} shows our two-phase index-assisted approximate query
evaluation framework, where phase 0 performs uniform sampling and phase 1
performs index-assisted stratified sampling using the stratification derived
from phase 0 samples. A user specifies an ad-hoc query $\mathcal{Q}$ in the
form defined in Equation~\ref{eqn:original_query}, the desired half-width of
confidence interval $\varepsilon$ and confidence level $\delta$.  In addition,
we require the user to specify an initial sample size $n_0$. This is
proportional to the time the user allows the system to spend to derive an
optimal stratification -- in the case where intra-query-range estimator
variance is so small that $n_0$ is sufficient to achieve the desired confidence
bound, the system will skip phase 1 entirely.

When phase 0 cannot achieve the desired confidence bound, we use one of our
stratification optimization methods (to be described in the next subsection),
to produce three vectors (line~\ref{line:optimize}): the stratum boundaries
$\vec{B} = [L, x_1, \ldots x_{k - 1}, U]$ (recall that $[L, U)$ is the query
range in $\mathcal{P}_r$), the estimated per-stratum estimator standard deviation
$\vec{\sigma} = [\sigma_1, \ldots, \sigma_k]$, and the estimated LCA
height/per-sample cost $\vec{h} = [h_1, \ldots, h_k]$. Then we repeatedly
perform index-assisted stratified sampling with the stratification until the
desired confidence bound is achieved.  In each loop, we compute the next sample
batch size and sample size allocation using the modified Neyman allocation
(Algorithm~\ref{algo:modified_neyman_allocation}), which also considers that
each estimator we output in the loop combines both phase 0 and phase 1
estimators (line~\ref{line:combined_estimator}). Note that we make two small
changes. (1) For online aggregation, it is desirable to periodically produce answers
so it is set to the smaller of a pre-configured \texttt{step-size} (or it can be set to
infinity if the user wants the execution to finish as fast as possible) and
the estimated remaining sample size needed to achieve the confidence bound; (2)
For each stratum, in order for the Central Limit Theorem to be valid, we follow
the suggestions in the literature~\cite{haas97:_large} to ensure at least $30$ samples are fetched
from each stratum, if its allocated sample size is smaller than $30$ in modified Neyman allocation.

\begin{algorithm}[t]
    \caption{\small Modified Neyman allocation (line~\ref{line:invoke_neyman_allocation} in
    Algorithm~\ref{algo:pswr_overview})}
    \label{algo:modified_neyman_allocation}
    \KwIn{
    $\vec{\sigma}, \vec{h}$: per-stratum estimator variance and sampling cost\\
    $n_0$: phase 0 sample size \\
    $\varepsilon_0$: phase 0 estimated confidence interval \\
    $\varepsilon, \delta$: desired confidence bound
    }
    \KwOut{
    $n$: next batch sample size \\
    $\vec{n}$: sample size allocation}
    
    $k \gets |\vec{\sigma}|$ \;
    $\sigma^2 \gets \sum_{j=1}^k \sqrt{h_i}\sigma_i * \sum_{j=1}^k \frac{\sigma_i}{\sqrt{h_i}}$ \;
    $t1 = \frac{Z_\delta^2\sigma^2}{2\varepsilon^2} - n_0$ \;
    $t2 = t1^2 + n_0^2(\varepsilon_0^2/\varepsilon^2 - 1)$ \;
    $n_{tot} \gets \min\{t1 + \sqrt{t2}, \texttt{step-size}\}$\;
    \For{$j \gets 1 $ to $k$}{
        $n_j \gets \max\{30, \frac{\sigma_i/\sqrt{h_i}}
        {\sum_{i=1}^k\sigma_i/\sqrt{h_i}} n_{tot}\}$
    }
    Output $n = \sum_{i = 1}^k n_j, n = [n_1, n_2, \ldots, n_k]$ \;
\end{algorithm}


\begin{table}[t]
	\centering
    \vspace{-10pt}
    \caption{\reva{Summary of stratification strategies in this paper}}
	\vspace{-10 pt}
    \footnotesize
\reva{
\begin{tabular}{|l|l|l|l|l|}
\hline
    Strategy & Effectiveness & Extension? (\S~\ref{sec:related})  & Overhead \\ \hline
    \texttt{Greedy}         & Mostly Good & No& Medium \\ \hline
    \texttt{CostOpt}        & Good & Yes & Higher \\ \hline
    \texttt{SizeOpt}        & Varies a lot & Yes & Higher \\\hline
    \texttt{Equal}          & Varies a lot & Yes & Smallest  \\ \hline
\end{tabular}
    }
	\label{tab:summary_strategies}
\end{table}

\vspace{-5pt}
\subsection{Stratification optimization}
\label{sec:stratification}

In this subsection, we present 4 different methods for stratification
optimization with varying tradeoff between optimization overhead and sampling
cost reduction. Note that stratification optimization in \sysname is executed
during query execution and thus is counted towards the entire query execution
latency -- hence, we must strike a balance between bounding optimization
latency and how much we can improve future sampling cost.

\reva{As an overview, we summarize the trade-offs between these 4 different
methods in Table~\ref{tab:summary_strategies}. In general, the first two
methods \texttt{Greedy} and \texttt{CostOpt} have similarly good
effectiveness.
The trade-off is \texttt{CostOpt} has slightly higher overhead but
tends to optimal cost if given sufficient optimizer time while
\texttt{Greedy} may not always produce the best solution. Note that \texttt{Greedy} requires using the
physical sampling index to guide the optimization, and thus is limited to the
single-table aggregation we discuss so far. The techniques in \texttt{CostOpt},
however, are extendible beyond single tables, which we will discuss in
Section~\ref{sec:related}. The other two methods are baseline solutions based
on existing works in the literature, and they generally perform poorly and
cannot achieve optimal cost.}

\subsubsection{Greedy.}
\label{sec:greedy}

\begin{algorithm}[t]
    \small
\caption{\small \texttt{Greedy} Stratification Optimization}
\label{algo:pswr_tree}
    Skip line 2 in Algorithm~\ref{algo:pswr_overview} \;
    Partition $D = \cup \mathcal{D}$ using AB-tree and each stratum $D' \in
    \mathcal{D}$ is the full key range of some subtrees on the left-most or
    right-most paths for the query range $D$\;
    For each $D' \in \mathcal{D}$, $S_{D'} \gets \Gamma_{\Delta n_0}\sigma_{x \in D} T$ \;
    $\mathcal{S}_0 \gets \bigcup_{D' \in \mathcal{D}}S_{D'}$ \;
    $n_0 \gets n_0 - \Delta n_0 \times |\mathcal{D}|$ \;
    Compute current cost $c = c_{opt}$ as in Equation~\ref{eqn:cost_optimization_2} \;
    \While{$n_0 \geq 0$} {
        Find $D' \in \mathcal{D}$ with the largest estimator variance $\sigma^2_{D'}$ \;
        Remove $D'$ from $\mathcal{D}$\;
        $\Delta k \gets $ number of children of $D'$ \;
        Add all subtrees rooted at $D'$'s children as new strata \;
        For each new strata $D'$, $S_{D'} \gets \Gamma_{\Delta n_0}\sigma_{x \in D} T$ \;
        $\mathcal{S}_0 \gets \mathcal{S} \cup \bigcup_{\textrm{new stratum} D'}S_{D'}$ \;
        $n_0 \gets n_0 - \Delta n_0 \times \Delta k$ \;
        Recompute cost $c' = c_{opt}$ as in Equation~\ref{eqn:cost_optimization_2} \;
        \If(\tcp*[h]{\small $\tau$ is stopping threshold}){$(c - c') / c < \tau$}{ 
            \Break;
        }
        $c \gets c'$ \;
    }
    Replace line 4 of Algorithm~\ref{algo:pswr_overview} with modified
    estimators for samples from strata with overlapping (see Section~\ref{sec:greedy})\;
    Output $\vec{B}, \vec{\sigma}, \vec{h}$ for all strata in $\mathcal{D}$ \;
\end{algorithm}

%

\begin{figure}[t]
	\vspace{-10pt}
	\begin{minipage}[b]{0.48\linewidth}
        \centering
        \includegraphics[width=\textwidth]{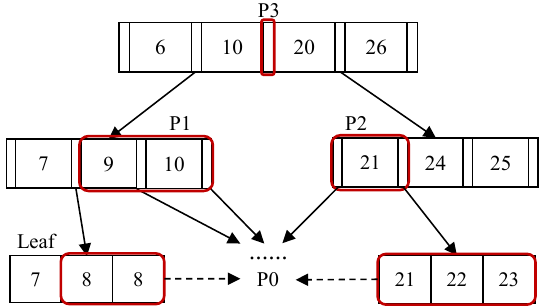}
		\vspace{-16pt}
        \caption{Greedy initial stratification for query range $[8, 24)$}
        \label{fig:structurepar}
    \end{minipage}
    \hfill
	\begin{minipage}[b]{0.48\linewidth}
        \centering
        \includegraphics[width=0.78\textwidth]{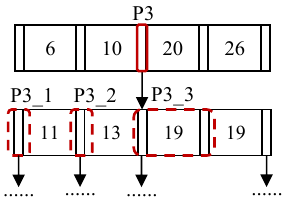}
		\vspace{-8pt}
        \caption{Dividing a stratum with the highest variance}
        \label{fig:despar}
    \end{minipage}
\end{figure}

We present a top-down structure-guided greedy algorithm for stratification
optimization (Algorithm~\ref{algo:pswr_tree}).  We first opportunistically
partition the entire range into a number of strata where each stratum is
covered by some subtrees on the left-most or right-most path found during the
preprocessing for query range $D$ (line 1). For example,
Figure~\ref{fig:structurepar} shows the leftmost and rightmost paths for query
range $[8, 24)$ in a 3-level aggregate B-tree, we can then easily identify the
following strata: (1) all subtrees between the two paths in the LCA
(\texttt{P3}); (2) subtrees in all internal nodes on the left-most or
right-most paths whose subtrees' key ranges are entirely covered by the query
range (\texttt{P1} and \texttt{P2}); (3) leaf entries on the left-most and
right-most paths covered by the query range (\texttt{P0} -- as an optimization,
we exactly aggregate this stratum instead of sampling as it is small). This
step is essentially free because it requires no additional index page access
and can always reduce future sampling cost thanks to
Theorem~\ref{thm:partition_is_better_general}.

To further reduce cost,
we start picking existing stratum with the highest potential cost reduction
and greedily break it down into a number of strata corresponding to the immediate subtrees
rooted at its children. To do so, instead of drawing $n_0$ samples upfront from
the entire query range $D$ in Algorithm~\ref{algo:pswr_overview} line 2, we
draw a smaller sample of size $\Delta n_0$ from each stratum, which we call
the \textbf{per-stratum sample size} and is \texttt{Greedy}'s parameter, and compute
the estimated total cost to reach the desired confidence bound using
Equation~\ref{eqn:cost_optimization_2} under the current stratification. Then,
we greedily pick one stratum $D'$ with the largest estimator variance and break it down
into a number of strata corresponding to subtrees rooted at $D'$'s children (see Figure~\ref{fig:despar} for an example).
However, as an heuristic optimization, if two or more adjacent partitions only
contain the same keys (which are common for low-cardinality dimension columns
in a large table, such as date), we do not further partition them.
We repeat this process until the relative improvement of overall cost is smaller
than a preset \textbf{stopping threshold} $\tau$ or we run out of the initial
sample budget $n_0$.

Another caveat in \texttt{Greedy} is we need to account for samples from overlapping
strata when computing the initial estimator and half-width of confidence
interval (line 19). Conceptually each sample can be tagged with which stratum
it is sampled from. When there are overlapping strata spanning two adjacent levels, the overlapping
must stem from the greedy stratification -- i.e., there's a parent stratum
and a number of child strata which collectively match with the range of the parent
stratum. Suppose there are $\Delta k$ such child strata, then we take a simple arithmetic
mean of all the estimators $\tilde{A}$ derived from the $\Delta k + 1$ strata, which is still unbiased,
and treat it as if it were derived for the parent stratum. The confidence interval $\varepsilon$
also needs to be appropriately scaled by $(\Delta k + 1)^2$. In our actual implementation,
we do not need to physically tag the samples. Instead, we can perform streaming
aggregation with appropriate scaling factors (more details in supplemental materials),
with no extra overhead.

\texttt{Greedy} is a heuristic approach that runs very fast (for only
performing a linear amount of additional computation to the number of strata).
However,
it may run out of budget before it is able to break down a high-variance
stratum deep in the tree, and thus could sometimes result in inferior stratification
than the next method below.

\subsubsection{CostOpt.}
\label{sec:costopt}

\begin{algorithm}[t]
    \small
    \caption{\small \texttt{CostOpt} Stratification Optimization}
    \label{algo:dp}

    Sort $\mathcal{S}_0$ in range key order\;
    $\vec{C}, \vec{m}, \vec{S}, \vec{S2}, \vec{h} \gets$ collect candidate
    boundaries, cumulative moment statistics, and cumulative LCA heights (Def.~\ref{prop:moment_statistics})\;
    $K \gets |key|$\;
    $g_1 \gets [\sigma[L, C_i)\sqrt{h[L, C_i)}$ for $0 \leq i \leq K]$ \;
    $\texttt{min}f \gets r +
    \frac{Z_\delta^2}{\varepsilon^2}g_1^2[K]$ \;
    \For{$k = 2$ to $K$}{
        compute $g_k[K]$ (Equation~\ref{eqn:recursion_g}) \;
        $\texttt{min}f' \gets kr +
        \frac{Z_\delta^2}{\varepsilon^2}g_k^2[K]$ \;
        \If{$\texttt{min}f' \geq \texttt{min}f$}
        {
            $k \gets k - 1$ \; 
            \Break;
        }
        compute $g_k[k..K-1]$ (Equation~\ref{eqn:recursion_g})\;
    }
    $B \gets \argmin_{B} g_k[K]$ \;
\end{algorithm}

Next we present a bottom-up method, \texttt{CostOpt} 
(Algorithm~\ref{algo:dp}). At a glance, we first use the initial sample $\mathcal{S}_0$ of
size $n_0$ to obtain a set of candidate boundary keys -- for now, we assume
these are distinct keys from the sample $\mathcal{S}_0$ and the query range's
lower and upper bounds L and U. We sort and store this set in $\vec{C} = [C_0 =
L, C_1, C_2, \ldots, C_{K} = U]$ of size $K + 1$. Then we minimize $c_{opt}$ in
Equation~\ref{eqn:cost_optimization_2}
using a dynamic programming algorithm below.

Before we go into the details, we note that we must be able to
calculate the estimator variance and average per-sample cost\footnote{A stratum
for arbitrary range could span multiple subtrees at different heights on the
left-most and right-most path during preprocessing in aggregate B-tree. We
adopt another optimization: each sample only needs to start
from a lower subtree if it falls into a subtree is on the left-most or
right-most paths.
Hence, the per-sample cost can be an average instead of just the height of
LCA.} for any subrange with boundary keys from $\vec{C}$, in order for
calculating $c_{opt}$. While these can be re-estimated using $\mathcal{S}_0$ on
demand, this will increase the optimization cost by a factor of $n_0$ -- which
could be too expensive. Instead we maintain a few cumulative statistics in linear time and
space to be able to compute them in $O(1)$ time.
More specifically, let $0 \leq j' < j \leq K$ be two
indices into $\vec{C}$, and we denote the estimator variance for range $[C_{j'},
C_{j})$ (if it is used as a stratum) as $\sigma^2[C_{j'}, C_{j})$ and
per-sample cost as $h[C_{j'}, C_{j})$. We maintain four cumulative vectors
$\vec{m}, \vec{S}, \vec{S2}, \vec{h}$ as follows:

\vspace{-10 pt}
{\allowdisplaybreaks
\begin{align*} 
    m_j &= |\{t \in \mathcal{S}_0 | t.x < C_j\}| \\
    S_j &= n_0\sum_{t \in \mathcal{S}_0 \wedge t.x < C_j} e(t)\mathcal{P}_f(t)\\
    S2_j &= \sum_{t \in \mathcal{S}_0 \wedge t.x < C_j}
    (n_0 e(t)\mathcal{P}_f(t) - S_j/m_j)^2 \\
    h_j &= \sum_{t \in \mathcal{S}_0 \wedge t.x < C_j} \texttt{LCA height of t} \\
\end{align*}
}
\vspace{-24 pt}

\begin{proposition}
    \label{prop:moment_statistics}
    \vspace{-7pt}
    It takes $O(1)$ time to compute any $\sigma^2[C_{j'}, C_j)$ and
    $h[C_{j'}, C_j)$, with $O(n_0)$-time pre-processing and $O(K)$ space\footnote{
This is based on the classic Youngs-Cramer
algorithm~\cite{doi:10.1080/00401706.1971.10488826}, a
numerically stable algorithm in PostgreSQL for computing variance and
standard deviation. We explain this using uniform sampling for
the ease of illustration, but it can be extended to weighted
sampling by substituting $m_i$ with empirical cumulative distribution
functions and appropriately scaling of $S$ and $S2$.}.
\vspace{-7pt}
\end{proposition}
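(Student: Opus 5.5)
Since $\mathcal{S}_0$ is sorted in range-key order (line 1 of Algorithm~\ref{algo:dp}), I would handle the preprocessing with a single linear sweep over it, advancing a pointer into $\vec{C}$ and filling $m_j, S_j, S2_j, h_j$ for $j = 0, \ldots, K$. For every range-sum quantity I would use plain prefix sums: $m_j$ counts the samples with $t.x < C_j$, $S_j$ accumulates the scaled values $n_0 e(t)\mathcal{P}_f(t)$, and $h_j$ accumulates the per-sample LCA heights. Each sample is touched once, so the sweep costs $O(n_0)$ (plus $O(K)$ to step the pointer, and $K \le n_0 + 1$). Only $K+1$ entries are stored per vector, which gives $O(K)$ space.

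The queries then follow from the prefix sums. The count of samples in $[C_{j'}, C_j)$ is $m_j - m_{j'}$. The sum of estimator values is $S_j - S_{j'}$. The average per-sample cost is $h[C_{j'}, C_j) = (h_j - h_{j'})/(m_j - m_{j'})$. For the variance, the clean route is to note that $S2_j$ is the centered sum of squares of the prefix $\{t : t.x < C_j\}$ about its own mean $S_j/m_j$.

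To obtain the centered sum of squares $M[C_{j'},C_j)$ of the sub-range, I would invert the Youngs--Cramer/Chan merge identity. Splitting the prefix up to $C_j$ into the prefix up to $C_{j'}$ and the sub-range gives
\[
S2_j = S2_{j'} + M[C_{j'},C_j) + \frac{m_{j'}(m_j-m_{j'})}{m_j}\Bigl(\frac{S_{j'}}{m_{j'}} - \frac{S_j - S_{j'}}{m_j - m_{j'}}\Bigr)^2 ,
\]
and this can be solved for $M$ in $O(1)$ arithmetic. Dividing by $m_j - m_{j'} - 1$ yields the sample variance of the individual Horvitz--Thompson estimators restricted to the stratum. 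The stratum's estimator variance $\sigma^2[C_{j'},C_j)$ follows by rescaling with the stratum's population fraction, which is estimated as $(m_j - m_{j'})/n_0$ times the range total. This matches the closed form for $\sigma_i^2$ in the proof of Theorem~\ref{thm:partition_is_better_general}, where the scaling $\alpha_i N$ appears. Every step is a constant number of arithmetic operations.

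The main obstacle is correctness of this subtraction step. The decomposition must be justified precisely, including the case $m_{j'} = 0$, where the correction term vanishes, and the stratum's implicit reweighting must be reconciled with the fact that $S$ is scaled by $n_0$ rather than by the stratum size. I would verify the identity by expanding both sides around the pooled mean and then check the scaling on a two-stratum example. For weighted sampling, as the footnote indicates, I would replace $m_j$ with empirical CDF values and scale $S$ and $S2$ accordingly; the algebra is unchanged.
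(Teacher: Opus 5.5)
Your proposal is correct and follows the paper's proof essentially step for step. A single pass over the sorted $\mathcal{S}_0$ fills the prefix vectors $\vec{m}, \vec{S}, \vec{S2}, \vec{h}$; then $h[C_{j'},C_j)$ comes from differencing and dividing by the stratum's sample count, and $\sigma^2[C_{j'},C_j)$ comes from solving the Youngs--Cramer/Chan merge identity for the stratum's centered sum of squares, followed by a Bessel correction and a rescaling that costs $O(1)$ whatever constant is chosen, so the scaling question you flag does not affect the complexity claim.

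Your differences at index $j'$ are in fact the ones consistent with the definition $m_j = |\{t \in \mathcal{S}_0 \mid t.x < C_j\}|$. The paper's displayed formulas instead subtract the entries at $j'-1$, which is off by one and undefined at $j'=0$, and your explicit handling of $m_{j'} = 0$ covers a case the paper's formula leaves as $0/0$.
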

\begin{proof}
    Preprocessing and storage of $\vec{m}, \vec{S}, \vec{S2}_j$ and $\vec{h}$ obviously
    take $O(K)$ space and $O(n_0)$ time after sorting. $\forall 0 \leq j' < j \leq K$, the estimator variance for
    partition $[C_{j'}, C_j)$ can be estimated as follows: 
    \begin{align*}
        \phantom{1} \sigma^2[C_{j'}, C_j) =& (S2_j - S2_{j'-1} -  \\
        \phantom{1} & \frac{m_{j'-1}(m_j -
	    m_{j'-1})(\frac{S_{j'-1}}{m_{j'-1}} - \frac{S_j - S_{j'-1}}{m_j -
	    m_{j'-1})})^2}{m_j}) \\ &\phantom{1} \times \frac{m_j - m_{j' - 1}}{(m_j - m_{j'-1} - 1)
         n_0}
    \end{align*}

    And the average LCA height can be estimated as follows:
    \begin{align*}
        h[C_{j'}, C_j) = \frac{h_j - h_{j' - 1}}{m_j - m_{j'-1}}
    \end{align*}

    Both have a constant complexity.
\end{proof}

Now, if we rewrite Equation~\ref{eqn:cost_optimization_2} as a function
over $\vec{B}$, we have
\vspace{-5 pt}
\begin{align*}
    c_{opt}(\vec{B}) &= c_0(|\vec{B}| - 1)
    + \frac{Z_\delta^2}{\varepsilon^2}(\sum_{j = 1}^k
    \sigma[B_{j-1}, B_j)\sqrt{h[B_{j-1}, B_j)})^2
\end{align*}
\vspace{-5pt}

\begin{figure}[t]
    \centering
    \includegraphics[width=.78\linewidth]{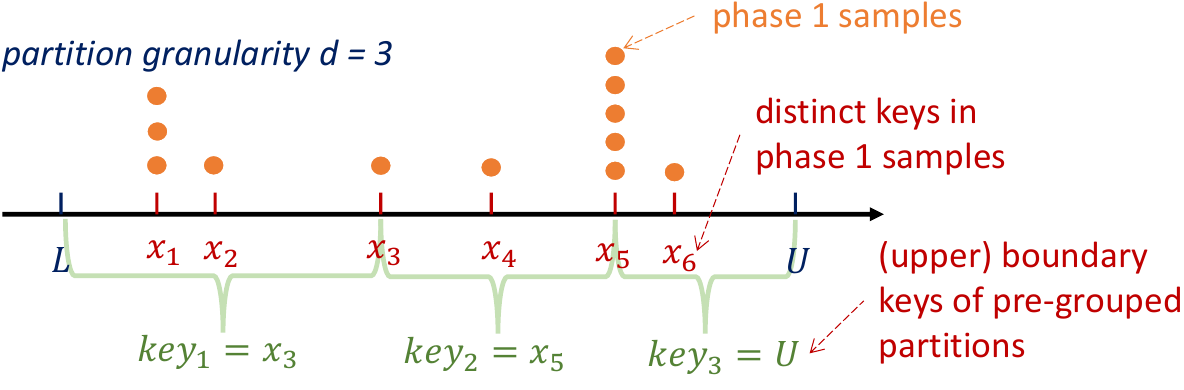}
	\vspace{-14pt}
    \caption{Partition granularity $d = 3$ on a phase 1 sample with
    $6$ distinct values results in $3$ pre-grouped partitions.}
    \vspace{-5pt}
    \label{fig:partition_granularity}
\end{figure}

While the total number of $\vec{B}$ is exponential in $K$, we can derive a
dynamic programming algorithm that works in $O(K^3)$ in the worst-case. As
analyzed previously, $c_{opt}$ is a V-shaped function over the number of
strata, denoted as $k$.  Therefore, we can first optimize the target function
$c_{opt}$ by performing linear search with increasing $k$, and then solve for
minimum of the second term in $c_{opt}$ for each $k$. We can stop as early as
we find the turning point (lines 9 - 11).

To find the minimum of $c_{opt}$ given $k$, it is sufficient to find the minimum of the summation
inside the square -- we denote that as

$$g_k = \min_{\vec{B}: |\vec{B}| = k + 1 \wedge B_0 = L \wedge B_k = U } 
\sum_{1 \leq j \leq k} \sigma[B_{j - 1}, B_j)\sqrt{h[B_{j-1}, B_j)}$$

Slightly overloading the notation, we define $g_k[j]$ as the minimal value of
the summation with $k$ partitions in the range of $[L, B_j)$ (and thus $g_k =
g_k[k]$). Then we can recursively find $g_k[j]$ as follows:
\begin{equation}
    \label{eqn:recursion_g}
    g_k[j] = \min\{g_{k-1}[j'] + \sigma[B_{j'}, B_j)\sqrt{h[B_{j'}, B_j)} | k - 1 \leq j' < j \}
\end{equation}

Now that any $\sigma[B_{j'}, B_j)$ and $h_[B_{j'}, B_j)$ can be computed in $O(1)$
time, each $g_k[j]$ can be computed in $O(K)$ time through dynamic
programming with only $O(K)$ space (since only the latest $g_k$ needs
to be stored at a time if we compute in decreasing order on line 12),
resulting in a total time of $O(K^3)$. 
Finally, we compute the optimal partition boundaries $B$ on line 13
with $O(K)$ time and $O(K^2)$ space, which leverages a
standard technique for tracking optimal decisions in dynamic
programming and we omit for brevity.

Hence, Algorithm~\ref{algo:dp} runs in $O(K^3)$ time and $O(K^2)$ space. Clearly,
this could be very expensive if the initial sample size is large and the range
predicate column $x$ has a high cardinality. On the flip side, the algorithm
can minimize the future cost. To further limit the optimization overhead and
balance the trade-off between cost and benefit, we introduce one additional
hyper-parameter $d$, the \textbf{partition granularity}
(Figure~\ref{fig:partition_granularity}), such that the optimization will run
in $O(\min\{K^3, d^3\})$ time. Specifically, we group the distinct values in
$\mathcal{S}_0$ into up to $d$ groups with the same number of distinct values in
each, and treat them as up to $d$ minimal strata that we do not further divide
during stratification. As Theorem~\ref{thm:partition_is_better_general}
suggests, this results in higher future sampling cost required, but meanwhile,
bounds the optimization time by a constant. On one extreme when the partition
granularity is $d = +\infty$, we place no restriction on how fine we may
partition the query range.  On the other extreme when the partition granularity
is $d = 1$, we effectively disable the stratification optimization. Later, we
will show in our experiment that a moderate value of $d$ of tens to hundreds
typically works the best.

\vspace{-5 pt}
\subsubsection{SizeOpt.}
\label{sec:sizeopt} We also consider a Neyman allocation only
approach to minimize sample size as an alternative. Based on
Theorem~\ref{thm:partition_is_better_general}, we should partition the query
range into the finest strata that we can achieve.  To do so, we simply collect
all distinct keys from the initial sample $\mathcal{S}_0$, sort them, and
create a different strata with every two adjacent keys as lower and upper bounds.
Then we run the standard Neyman allocation (Lemma~\ref{lem:neyman_allocation}).
As Neyman allocation can be implemented using one single pass over sorted data -- we can compute the estimator variances of each strata $\sigma^2_i$ in a streaming fashion over sorted samples as needed -- we
do not have to additionally maintain any statistics or data structure as we do
in \texttt{Greedy} and \texttt{CostOpt}. Thus, \texttt{SizeOpt} can be cheaper
at the cost of sometimes worse future sampling cost, especially when the per-strata sampling
cost has a larger variance.

\subsubsection{Equal.}
\label{sec:equal}
The last method, \texttt{Equal}, is a baseline approach similar
to~\cite{10.1145/3183713.3196905} that does not require any apriori statistics.
Strata are derived in the same way as \texttt{SizeOpt}. Then we
do not perform any computation of estimator variances -- instead, we simply
allocate equal number of samples to all strata. Note that, the main differences
between \texttt{Equal} and the stratified sampling
in~\cite{10.1145/3183713.3196905}  (which we denote as \texttt{ScanEqual}) are:
(1) \texttt{Equal} uses samples to create strata while \texttt{ScanEqual}
scans the entire table for distinct keys to create strata; (2) \texttt{Equal}
uses sampling indexes to perform index-assisted stratified sampling while \texttt{ScanEqual}
performs scans to perform Bernoulli sampling within each stratum.

\subsection{Implementation of \sysname in PostgreSQL}
\label{sec:impl}

We implemented \sysname in PostgreSQL 13.1 with AB-tree (an aggregate B-tree
implementation)~\cite{10.14778/3538598.3538606}, based on an index-assisted
S-AQP PostgreSQL plugin, pgAQP~\cite{10.14778/3611540.3611602}.  The
original pgAQP plugin only supports the approximate query evaluation
with index-assisted uniform sampling of user specified sample sizes.
Specifically, it has two special index-assisted sampling access
methods: \texttt{SWRScan} and \texttt{SWRIndexOnlyScan}, implemented
as the extensible \texttt{CustomScan} operator. They use the
sampling index over a selected range predicate column to provide sampled tuples and sampling probabilities to
the upper physical plan tree nodes. Upon receiving a query with a
\texttt{TABLESAMPLE SWR(k)} clause, its optimizer hooks rewrite the
query plan to (1) inject the index-assisted access path for the
sampled table; (2) modify the approximate aggregation and CLT-based
confidence interval functions as aggregation of estimators.

To implement the two-phase index-assisted approximate query evaluation
framework in \sysname, we introduce \texttt{TABLESAMPLE} \texttt{PSWR} operator
and significantly modified the query rewriting, planning and execution logics
from pgAQP (Figure~\ref{fig:sysarch}(left)). Following shows as an example syntax
for the flight delay query in Figure~\ref{fig:flight_sample}:

{\small
\begin{verbatim}
SELECT APPROX_COUNT(*), APPROX_COUNT_HALF_CI(0.95)
FROM T TABLESAMPLE PSWR(50000, 8000, 0.95)
WHERE date BETWEEN '2001-09-06' AND '2001-09-20'
  AND cancelled = 1;
\end{verbatim}
}

\begin{figure}[t]
    \centering
    \includegraphics[width=.8\linewidth]{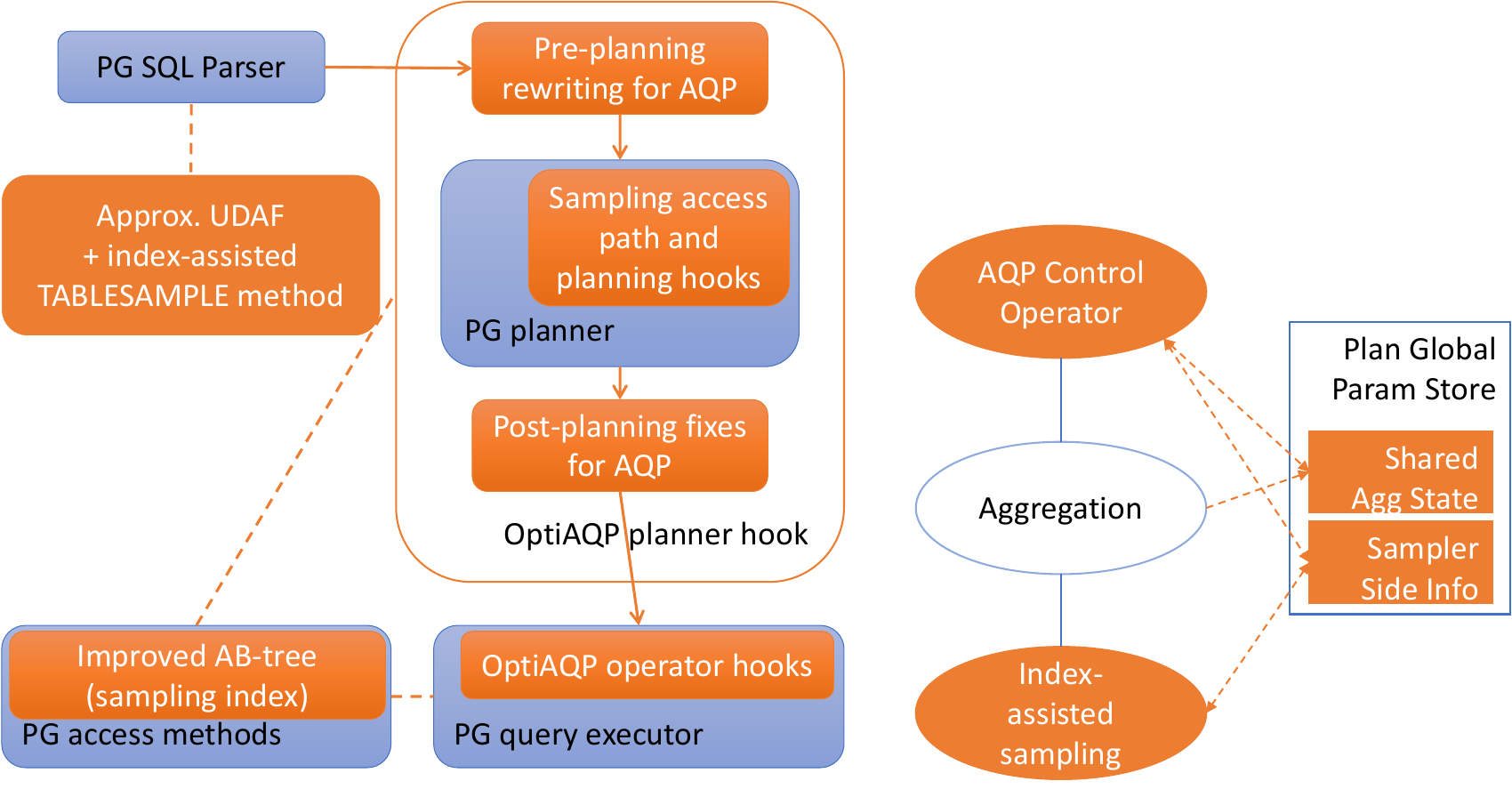}
	\vspace{-13pt}
    \caption{\sysname (left) components; (right) query plan} 
	\vspace{-5pt}
    \label{fig:sysarch}
\end{figure}

Given this query, \sysname will perform the two-phase index-assisted
approximate query evaluation as described in this section, with a target
absolute confidence interval $8000$ at confidence level $0.95$ and an initial
sample size $n_0 = 50000$. The specific stratification method used is
controlled by a session-wide parameter.  Internally, we generate a custom query
plan as shown in Figure~\ref{fig:sysarch}(right). The AQP control operator is a
special operator that serves dual purposes (1) as a standard projection
operator that applies the projection on top of aggregation; (2) to perform the
stratification optimization discussed previously. We further
optimize the implementation to avoid unnecessary recomputation and memory
copies during runtime, by introducing two types of shared states in the plan
global parameter store: (1) sampler side information object, which is used by
AQP control operator to provide new partitioning and sample size information
one partition at a time to the index-assisted sampling operator, as well as the
index-assisted sampling operator to pass sampling probability and filter or
index rejection results back to the aggregation functions; (2) shared
aggregation state object for each unique aggregation expression\footnote{An
approximate aggregation function and the corresponding CI function share the
same aggregation state.} which serves as the sample estimator store for phase
0\footnote{To avoid recomputing the aggregation twice for phase 0, 
we defer the actual aggregation in phase 0 after aggregation
operator returns to the AQP control operator, when we compute the cumulative
moment statistics.}, and the standard streaming aggregation state for phase 1.
In addition, we also modified the original AB-tree with the various
optimization previously discussed.  Our current
implementation only supports single-threaded execution.  Nevertheless, we note
that both phases are embarrassingly parallel, and thus we consider
parallel execution support as a future work.

\cut{

This section details our algorithms of implementation based upon the above motivations. Our implementation includes three parts, including how to allocate samples to each partitions to minimize the cost, how to partition the dataset optimally and how to progressively determine $n_0$.

\subsection{Given partition, allocate samples}

The range predicate $\mathcal{P}_r$ over $t$ can be represented in the form of $t \in [L,U)$. Assuming $\mathcal{P}_r$ has been partitioned to a given set of $k$ disjoint ranges $T_1 = [l_0,l_1), T_2 = [l_1,l_2),..., T_k = [l_{k-1},l_k) $, the question is how to allocate samples to each partitions to minimize the cost.

For each range, the query can be modified as:

\begin{itemize}
    \item[] \texttt{SELECT AGG(e) as $S_{i}^{'}$}
    \item[] \texttt{FROM (SELECT * FROM T WHERE $t_i \in [l_{i},l_{i+1})$) $T_i$}
    \item[] \texttt{WHERE $\mathcal{P}_f$}
\end{itemize}

where $T_i= \Theta_{t_1\in [l_i,l_{i+1})}T=\{r_1, r_2,..., r_{Ni}\}$ and $r_j$ for $l\le j\le N_i$ is the record in T. And then we draw $n_i^{'}$ uniform and independent samples with replacement $\{s_1, s_2,..., s_{n_i^{'}}\}$ in each range in the initial sampling phase, and $S_i'$ is the sum of the value of tuples passed $\mathcal{P}_f$, which is also the target answer. And let $S^{'}=\sum_{i=1}^{k} S_i^{'}$

Set $I(t_{ij})=1$ if tuple $t_{ij}$ satisfies $\mathcal{P}_r$ and $\mathcal{P}_f$, and $I(t_{ij})=0$ otherwise.
$\tilde{A_{n_i}^{'}}\triangleq \frac{\sum_{t_{ij} \in [1,n_i^{'}]}I(t_{ij}) e(t_{ij})/P(t_{ij})}{n_i^{'}}$, which is the estimated answer for each range. 

$E[\tilde{S_i}] = \frac{1}{n_i^{'}}\sum_{t_{ij}}E[I(t_{ij}) e(t_{ij})/P(t_{ij})] = \frac{1}{n_i^{'}}*n_i^{'}*(\sum_{t_{ij} \in \bar{R} \wedge I(t_{ij})=0}0 + \sum_{t_{ij} \in \bar{R} \wedge I(t_{ij})=1} \frac{e(t_{ij})}{p(t_{ij})}p(t_{ij})) = \sum_{t_{ij} \in \bar{R} \wedge I(t_{ij})=1} e(t_{ij}) = A $.

$\sigma_i^2 = Var[\tilde{S_i}]=\frac{1}{n_i^{'}}Var[I(t_{ij}) e(t_{ij})/P(t_{ij})]$


$\widehat{Var}[I(t_{ij}) e(t_{ij})/P(t_{ij})] =  \frac{\sum_{t_{ij}}(I(t_{ij})*e(t_{ij})/P(t_{ij}))^2 - n_i^{'} \tilde{S_i}^2} {n_i^{'} - 1} $

According to central limit theorems (CLT's) for independent and identically distributed (i.i.d.) random variables, $\frac{\sqrt{n_i}(\tilde{N_i'} -\mu_i)}{\sigma_i}\Rightarrow N(0,1)$ as $n_i \to \infty$, so we can assume $\tilde{N_i} \sim N(\mu_i, \frac{\sigma_i^{2}}{n_i})$ as $n_i$ is large enough.

$\tilde{N'}=\sum_{i=1}^{k}\tilde{N_i'}$, we can also assume  $\tilde{N'}\sim N(\sum_{i=1}^{k}\mu_i, \sum_{i=1}^{k} \frac{\sigma_i^{2}}{n_i})$, because $N_i'$s are all independent.

And its confidence interval can be formulated as follows:
$N'\in [\tilde{N'}-\varepsilon,\tilde{N'}+\varepsilon]$ with probability p.

$E[\tilde{N'}]=N'=\sum_{k}^{i=1}N_i'$

$\tilde{N'}-N' \sim N(0, \sum_{i=1}^{k} \frac{\sigma_i^{2}}{n_i})$, let $\sigma^2 = \sum_{i=1}^{k} \frac{\sigma_i^{2}}{n_i}$, so $\frac{\tilde{N'}-N'}{\sigma} \sim N(0, 1)$

So we have $P(|\frac{\tilde{N'}-N'}{\sigma}|\le \frac{\varepsilon}{\sigma})\approx 2\Phi (\frac{\varepsilon}{\sigma})-1=p$, $\Phi (\frac{\varepsilon}{\sigma})=\frac{p+1}{2}$ for large samples.

for $\varepsilon>0$, where $\Phi$ is the cumulative distribution function of an N(0,1) random variable. $Z_p$ is the $(p+1)/2$ quantile of this distribution function, so $\Phi(Z_p) = (p+1)/2$.

So $Z_p = \frac{\varepsilon}{\sigma}, \varepsilon = Z_p\sigma = Z_{p}\sqrt{\sum_{i=1}^{k}\frac{\sigma_{i}^{2}}{n_{i}}}$

Minimize $C=\sum_{i=1}^{k}n_{i}$, subject to $Z_{p}\sqrt{\sum_{i=1}^{k}\frac{\sigma_{i}^{2}}{n_{i}}}=\varepsilon$. According to Lagrange multiplier, $n_i$ can be estimated as follows:
\begin{equation}
n_i=\frac{Z_{p}^{2}\sigma_{i}\sum_{i=1}^{k}\sigma_{i}}{\varepsilon^{2}}
\end{equation}

so the total cost of the query is $C \ge \frac{Z_{p}^{2}
(\sum_{i=1}^{k}\sigma_{i})^{2}}{\varepsilon^{2}}$, where $\sigma_{i} = \sqrt{\frac{\sum_{t_{ij}}(I(t_{ij})*e(t_{ij})/P(t_{ij}))^2 - n_i \tilde{S_i}^2} {n_i^{'} - 1}}$

\subsection{How to get partitions}

See Algorithm~\ref{algo:dp}. Clustering can be used to divide M nodes into K clusters through calculating the distance between pairs of clusters. However, the cost cannot be sure to be minimal. As a result, we utilize dynamic programming to find optimal ways to partition the nodes into k clusters to minimize the cost.

In dynamic programming, we need to find how to partition the dataset (find k and where to split) to minimize the cost:

Let $N_{i}^{'}\triangleq\rho_{i}N_{i}$, 

then $C \ge \frac{Z_{p}^{2} (\sum_{i=1}^{k}\sqrt{\frac{\sum_{t_{ij}}(I(t_{ij})*e(t_{ij})/P(t_{ij}))^2 - n_i^{'} \tilde{S_i}^2} {n_i^{'} - 1}} )^{2}}{\varepsilon^{2}}$, 

so the target of the optimization:

$f = \sum_{i=1}^{k}\sqrt{\frac{\sum_{t_{ij}}(I(t_{ij})*e(t_{ij})/P(t_{ij}))^2 - n_i^{'} \tilde{S_i}^2} {n_i^{'} - 1}}$.

\begin{algorithm}[!ht]
\caption{Dynamic Programming}\label{algo:dp}
\begin{algorithmic}[1]
\REQUIRE M points of $(key, )$, the number of clusters K
\ENSURE K clusters

\STATE $t_{ki} \gets +\infty$ for all $k \in [1, K]$ and $i \in [1, M]$ 

\FOR{$i = 1\;to\;M$} 
     \STATE $f_{1i}= \sqrt{\frac{1}{n_i^{'} - 1} * \sum_{t_{ij}}((I(t_{ij})*e(t_{ij})/P(t_{ij}))^2 - n_i^{'} \tilde{S_i}^2)}$
\ENDFOR
\FOR{$k = 2\;to\;K$} 
    \FOR{$i = k\;to\;M$} 
         \STATE $f_{ki} \gets$ $\min \{
         k-1\leq j\leq i - 1 \;|\;
         f_{(k-1)j}+\sqrt{\frac{1}{n_j^{'} - 1} *(\sum_{t_{ij}}(I(t_{ij})*e(t_{ij})/P(t_{ij}))^2 - n_j^{'} *(\tilde{S_{j+1}}^2-\tilde{S_{j}}^2))}\}$
    \ENDFOR
\ENDFOR
\end{algorithmic}
\end{algorithm}

\subsection{Progressively get n0}

}

\cut{ 
\section{CUT BELOW}

===============================================

\begin{theorem}
    \label{def:lyapunov_clt} (Lyapunov CLT~\cite[Thm.~27.3]{alma991026726359704801})

    Suppose $\{X_1, \ldots, X_n, \ldots\}$ is a sequence of
    independent random variables, each with finite expected value
    $\mu_i$ and variance $\sigma_i^2$. Let $s_n^2 = \sum_{i=1}^n
    \sigma_i^2$. If for some $\delta > 0$, $\lim_{n \rightarrow
    \infty} \frac{\sum_{i=1}^n E[|X_i - \mu_i|^{2 + \delta}]}{s_n^{2 + \delta}} = 0$,
    then $\frac{\sum_{i=1}^{n}{X_i - \mu_i}}{s_n}$ converges to
    a standard normal distribution.
\end{theorem}

For a dataset A with $M$ distinct keys and $N$ tuples, we sort all the keys in increasing order, numbered as from 1 to $M$. We consider the following query formulation in this research:
\begin{itemize}
    \item[] \texttt{SELECT AGG(attribute collection)}
    \item[] \texttt{FROM A}
    \item[] \texttt{WHERE p1 and p2}
\end{itemize}

where AGG is one of the standard aggregation functions, such as SUM, COUNT, AVG, and p1 and p2 are two predicates, where p1 is a range predicate over some column $t_1$, in the form of $t_1 \in [L, U)$, and p2 can be an arbitrary predicate. For the rest of the formulation, we first consider COUNT queries and will discuss SUM and AVG queries later.

We aim to partition the range $t_{1}\in [L,U)$ of table A into a set of $k$ disjoint ranges $A_1 = [l_0,l_1), A_2 = [l_1,l_2),..., A_k = [l_{k-1},l_k) $ to minimize the cost. 

The total cost can be computed as the following:

For each range, the query can be modified as:

\begin{itemize}
    \item[] \texttt{SELECT COUNT(*) as $N_{i}^{'}$}
    \item[] \texttt{FROM (SELECT * FROM A WHERE $t_i \in [l_{i},l_{i+1})$) $A_i$}
    \item[] \texttt{WHERE p2}
\end{itemize}

where $A_i= \Theta_{t_1\in [l_i,l_{i+1})}A=\{r_1, r_2,..., r_{Ni}\}$ and $r_j$ for $l\le j\le N_i$ is the record in A. And then we draw $n_i$ uniform and independent samples with replacement $\{s_1, s_2,..., s_{n_i}\}$ in each range, and $N_i$ is the number of tuples with the $i^{\texttt{th}}$ key, and $N_i'$ is the number of tuples among them that passes the second predicate $p2$, which is also the target answer. And let $N=\sum_{i=1}^{k} N_i$ and $N^{'}=\sum_{i=1}^{k} N_i^{'}$.

For any $j \in [1, n_i]$, and any $l \in [1, N_i]$, $Pr(s_j=r_l) = \frac{1}{N_i}$

Set $v(i)=1$ if tuple $m_i$ satisfies p2 and $v(i)=0$ otherwise.
$\tilde{N_i}\triangleq \frac{\sum_{j\in [1,n_i]}V(s_j)}{n_i}N_i $, which is the estimated answer for each range. 

$E[\tilde{N_i}] = \frac{1}{n_i}\sum_{j\in[1,n_i]}Pr(p_2(s_j))\cdot N_i=\frac{1}{n_i}\cdot n_i\cdot \frac{N_i'}{N_i}\cdot N_i = N_i'$, since $Pr(p_2(s_j))=\frac{N_i'}{Ni}$.

$\sigma_i^2 = Var[\tilde{N_i}]=\frac{N_i'}{Ni}(N_i-N_i')^2+\frac{N_i-N_i'}{Ni}(0-N_i')^2=N_i'(N_i-N_i')$

According to central limit theorems (CLT's) for independent and identically distributed (i.i.d.) random variables, $\frac{\sqrt{n_i}(\tilde{N_i'} -\mu_i)}{\sigma_i}\Rightarrow N(0,1)$ as $n_i \to \infty$, so we can assume $\tilde{N_i} \sim N(\mu_i, \frac{\sigma_i^{2}}{n_i})$ as $n_i$ is large enough.

$\tilde{N'}=\sum_{i=1}^{k}\tilde{N_i'}$, we can also assume  $\tilde{N'}\sim N(\sum_{i=1}^{k}\mu_i, \sum_{i=1}^{k} \frac{\sigma_i^{2}}{n_i})$, because $N_i'$s are all independent.

And its confidence interval can be formulated as follows:
$N'\in [\tilde{N'}-\varepsilon,\tilde{N'}+\varepsilon]$ with probability p.

$E[\tilde{N'}]=N'=\sum_{k}^{i=1}N_i'$

$\tilde{N'}-N' \sim N(0, \sum_{i=1}^{k} \frac{\sigma_i^{2}}{n_i})$, let $\sigma^2 = \sum_{i=1}^{k} \frac{\sigma_i^{2}}{n_i}$, so $\frac{\tilde{N'}-N'}{\sigma} \sim N(0, 1)$

So we have $P(|\frac{\tilde{N'}-N'}{\sigma}|\le \frac{\varepsilon}{\sigma})\approx 2\Phi (\frac{\varepsilon}{\sigma})-1=p$, $\Phi (\frac{\varepsilon}{\sigma})=\frac{p+1}{2}$ for large samples.

for $\varepsilon>0$, where $\Phi$ is the cumulative distribution function of an N(0,1) random variable. $Z_p$ is the $(p+1)/2$ quantile of this distribution function, so $\Phi(Z_p) = (p+1)/2$.

So $Z_p = \frac{\varepsilon}{\sigma}, \varepsilon = Z_p\sigma = Z_{p}\sqrt{\sum_{i=1}^{k}\frac{\sigma_{i}^{2}}{n_{i}}}$

Minimize $C=\sum_{i=1}^{k}n_{i}$, subject to $Z_{p}\sqrt{\sum_{i=1}^{k}\frac{\sigma_{i}^{2}}{n_{i}}}=\varepsilon$

$n_i=\frac{Z_{p}^{2}\sigma_{i}\sum_{i=1}^{k}\sigma_{i}}{\varepsilon^{2}}$

so the total cost of the query is $C \ge \frac{Z_{p}^{2}
(\sum_{i=1}^{k}\sigma_{i})^{2}}{\varepsilon^{2}} =\frac{Z_{p}^{2} (\sum_{i=1}^{k}\sqrt{(N_{i}-N_{i}^{'})N_{i}^{'}} )^{2}}{\varepsilon^{2}}$

\textcolor{red}{Prob Revision}

In order to do the partition, we have at least two phases to get the tuples. the first phase is to collect the statistic and the second phase is to get the rest tuples from different partitions.

n: total sampling size.

$n_1$: Sampling size used in Phase 1.

$n_2$: Sampling size used in Phase 2, which equals $n-n_1$.

$t_{kj}^{(k)}$: The tuple got in Phase k in partition range j.

In Phase 1, there is $k_1$ range, where $k_1 = 1$. $n_{11} = n_1$, $E[e_1] = Ans$.

In Phase 2, there are $k_2$ range. $n_2 = n - n_1$.

In different partition, the sampling size is $n_{21}, n_{22}, n_{23}, ..., n_{2k_2}$.

$e_{2k_2} = \frac{\sum_{j=1}^{n_{2k_2}}\frac{1}{p_{k_2}^{(2)}}}{n_{2k_2}}$.

$e_2 = \sum_{k=1}^{k_2}e_{2k_2}$, $E[e_2] = Ans$.

So, $E[\frac{\alpha_1e_1+\alpha_2e_2}{\alpha_1+\alpha_2}] = Ans$.

Assuming $\alpha_1 = \frac{n_1}{n}, \alpha_2 = \frac{n_2}{n}$, and it is uniform sampling.

$E[\frac{\alpha_1e_1+\alpha_2e_2}{\alpha_1+\alpha_2}] = \frac{n_1e_1+n_2e_2}{n} = \frac{n_1\sum_{k=1}^{k_1}\sum_{j=1}^{n_{1k}}(\frac{1}{P(t_{kj}^{(1)})}/n_{1k})+n_2\sum_{k=2}^{k_2}\sum_{j=1}^{n_{2k}}(\frac{1}{P(t_{kj}^{(2)})}/n_{2k})}{n}=\frac{\sum_{p=1}^{2}\sum_{k=1}^{k_p}\sum_{j=1}^{n_{pk}}\frac{1}{n_{pk}P_k^{(p)}/n_p}}{n}$.

So the prob needs to be revised as $prob * \frac{n_pk}{n_p}$.

} 


\section{Experiments} \label{sec:exp}

In this section, we empirically evaluate \sysname. 
All code and data generator/links to datasets are available in our open-source
repository. The main objectives of our experiments include:

\vspace{-1pt}
\begin{pkl}
    \item Can two-phase index-assisted approximate query evaluation achieve a
        desired confidence bound faster than existing non-index-assisted S-AQP
        approaches?
    \item What are the overhead/benefit trade-offs among the four different
        stratification optimization methods?
    \item How sensitive are our approaches to the algorithms' parameters?
\end{pkl}

\subsection{Experiment setup}

\noindent\textbf{Methods under comparison:}

\begin{pkl} 
    \item \textbf{Greedy}: \sysname with \texttt{Greedy}
        (Section~\ref{sec:greedy}). Default per-stratum sample size
        $\Delta n_0 = 600$ and stopping threshold $\tau = 0.004$.


    \item \textbf{CostOpt}: \sysname with \texttt{CostOpt}
        (Section~\ref{sec:costopt}).  By default, we set partition granularity
        $d = 100$.
    \item \textbf{SizeOpt}: \sysname with \texttt{SizeOpt} (Section~\ref{sec:sizeopt}).

    \item \textbf{Equal}: \sysname with \texttt{Equal} (Section~\ref{sec:equal}).


   \item \textbf{Uniform}: Baseline index-assisted uniform sampling in
       pgAQP~\cite{10.14778/3611540.3611602}.  As the original pgAQP only takes
       sample size as parameter, we modified its implementation to take a
       requested confidence bound $(\varepsilon, \delta)$ and an initial sample
       size $n_0$ instead. It does not perform any optimization after the
       initial sampling; rather, only runs the phase 1 loop in
       Algorithm~\ref{algo:pswr_overview} until the confidence bound is
       achieved.
    \item \textbf{ScanEqual}:
        Baseline scan-based stratified sampling in
        VerdictDB~\cite{10.1145/3183713.3196905} on PostgreSQL.
        Since we test the online ad-hoc query settings, we include the time for
        refreshing the sample sets before each query\footnote{We thank
        the authors of~\cite{10.1145/3183713.3196905} for generating the
        query templates for stratified sampling with variational subsampling
        for computing CI for TPC-H queries, as it was disabled in its open-source
        repository. We manually created the rewriting for the rest of the
        queries.}. As \texttt{ScanEqual} only takes a sampling rate as
        parameter, we manually tune it for the requested $\varepsilon$
        with high probability.
    \item \textbf{Exact}: Baseline exact query in PostgreSQL
        with heap scan.
\end{pkl}

In all experiments, we configure all AQP \cut{systems} methods to derive
confidence intervals at $1 - \delta = 0.95$ confidence level. We set
preprocessing factor $c_0 = 100$ and initial sample sizes (or maximum initial
sample size for \texttt{Greedy}) $n_0 = min(200 * NDV, 100000)$, where $NDV$ is
the total number of distinct values of in query range -- it can easily be
estimated using existing DBMS statistics in practice.

\Paragraph{Datasets and queries.} \revc{Dataset statistics are in
Table~\ref{tab:summary_table}.}

\begin{enumerate} \item \textbf{flight}: the US airline on-time
            performance dataset~\cite{DVN/HG7NV7_2008}, enlarged to 10x. 
            We use the query in the motivating
            example in Section~\ref{sec:intro}.

        \item \textbf{intel}: the Intel Lab sensor data~\cite{intel},
            enlarged to 1000x.
            We query the total number of
            readings between '2004-02-28' and '2004-04-05', filtered
            by temperature greater than 27 Celsius degree.

        \item \textbf{census}: the 1994 US Census income
            dataset~\cite{census_income_20}, enlarged to 10,000x. 
            We query the number of surveyees who
            worked between 1 (inclusive) and 100 (exclusive) hours per
            week, filtered by income greater than $\$50K$. 

        \item \textbf{lineitem}: We generated the lineitem table from
            the skewed TPC-H benchmark~\cite{skewed_tpch}, with a
            modification to create special date ranges with
            higher delivery delays during the most common ship dates,
            to simulate holiday seasons.
            We generate several different datasets
            with varying scaling factors and number of special ranges.
            We query the total revenue of sales
            ($\texttt{SUM(l\_extendedprice * (1 - l\_discount))}$) in
            the ship date between '1992-01-01' and '1998-12-31',
            filtered by higher delivery delay:\\
            $\texttt{l\_receiptdate - l\_shipdate} > 49$.
\end{enumerate}

\begin{table}[t]
	\centering
	\caption{\revc{Summary of datasets in this paper}}
        \vspace{-10pt}
    \label{tab:summary_table}
    \revc{
	{
		\footnotesize
\begin{tabular}{|l|l|l|l|l|l|}

\hline
        & Columns & Rows & Table in PG & Sampling Index  \\ \hline
Flight   & 30      & 1.19 B & 211.9 GB      & 13.0 GB                \\ \hline
Intel    & 8       & 2.31 B & 147.1 GB      & 25.3 GB                \\ \hline
Census   & 15      & 0.33 B  & 92.0 GB       & 3.6 GB              \\ \hline
Lineitem & 17      & 0.13 B  & 38.8 GB       & 1.5 GB               \\ \hline
\end{tabular}
	}
}
    \vspace{-5pt}
\end{table}

\Paragraph{System configuration.} We perform experiments on a
dual-socket server with Intel Xeon Gold 6330 CPUs clocked at 2.6
GHz (max frequency at 3.1 GHz in turbo mode) and 512 GB DDR4 RAM. The system is
installed with Ubuntu Linux 22.04 LTS and configured with performance scaling policy.
The database files and logs are placed on a Samsung 990 PRO 4TB NVMe SSD.
All experiments are done in single-thread execution in the PostgreSQL
13.1 with AB-tree sampling index
support~\cite{10.14778/3538598.3538606} under snapshot isolation, built using GCC 11.3 with
\texttt{-O3} flag. 
We configure all the databases with 32 GB shared buffer pools, with all
default optimizer flags except for those of parallelism, which are disabled
to ensure single-threaded execution. 
For all databases, we create an AB-tree index
over the range predicate column and sufficiently warm up the internal and
system I/O buffer cache before experiments. We repeat each test for 10 times
to account for randomness, except tests on VerdictDB which are only run for
3 times due to its long execution time. When we plot data points, we also
plot the spread across runs using the standard whisker plots (representing
1.5 times of the difference between 25\% and 75\% quantiles).

\begin{figure*}[t] 
    \centering
    \begin{minipage}{.7\linewidth}
        \includegraphics[width=\linewidth,clip=true,viewport=0 415
        1200 476]{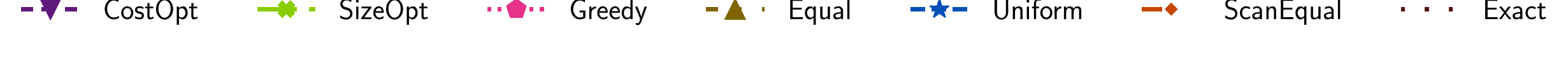}
        \vspace{-20pt}
    \end{minipage} \\
    \subfigure[flight]{
        \includegraphics[width=.27\linewidth]{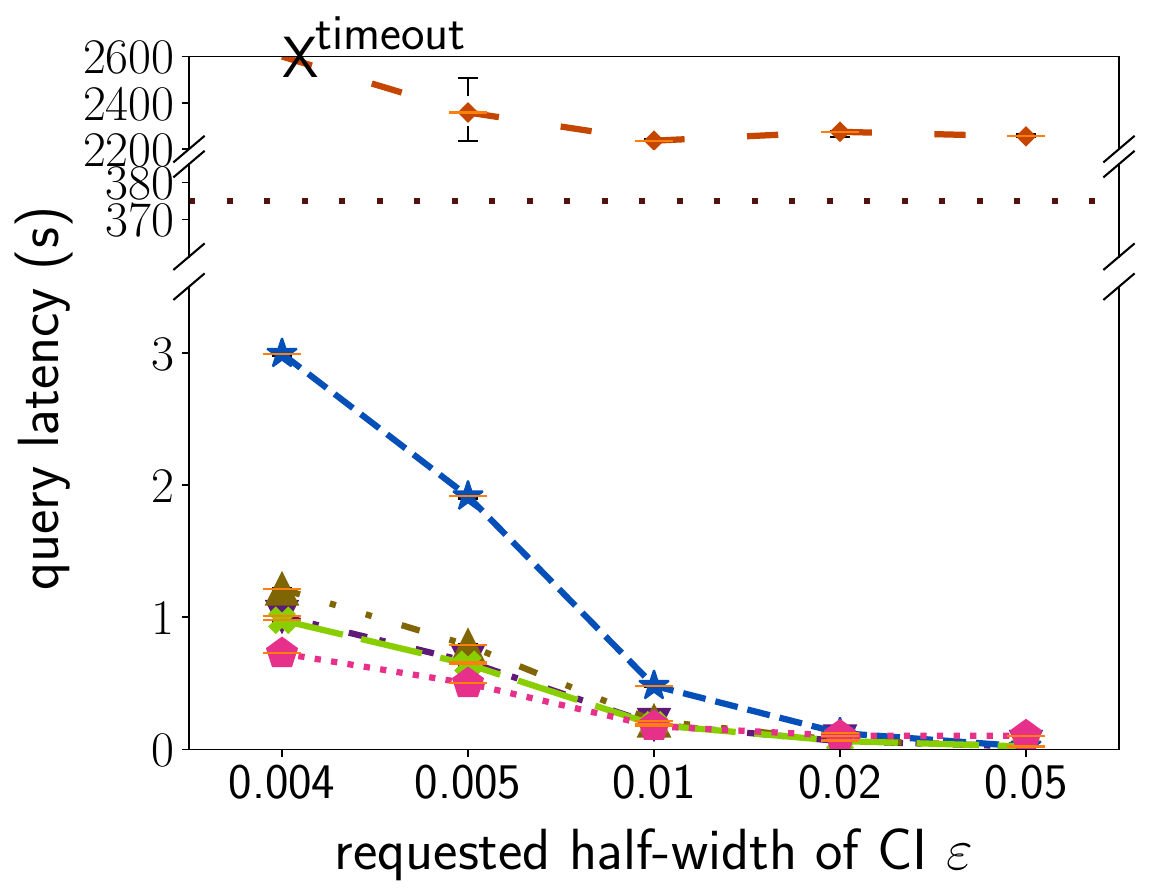}
        \label{fig:flight_latency} 
    }\hfill
    \subfigure[intel]{
        \centering
        \includegraphics[width=.27\linewidth]{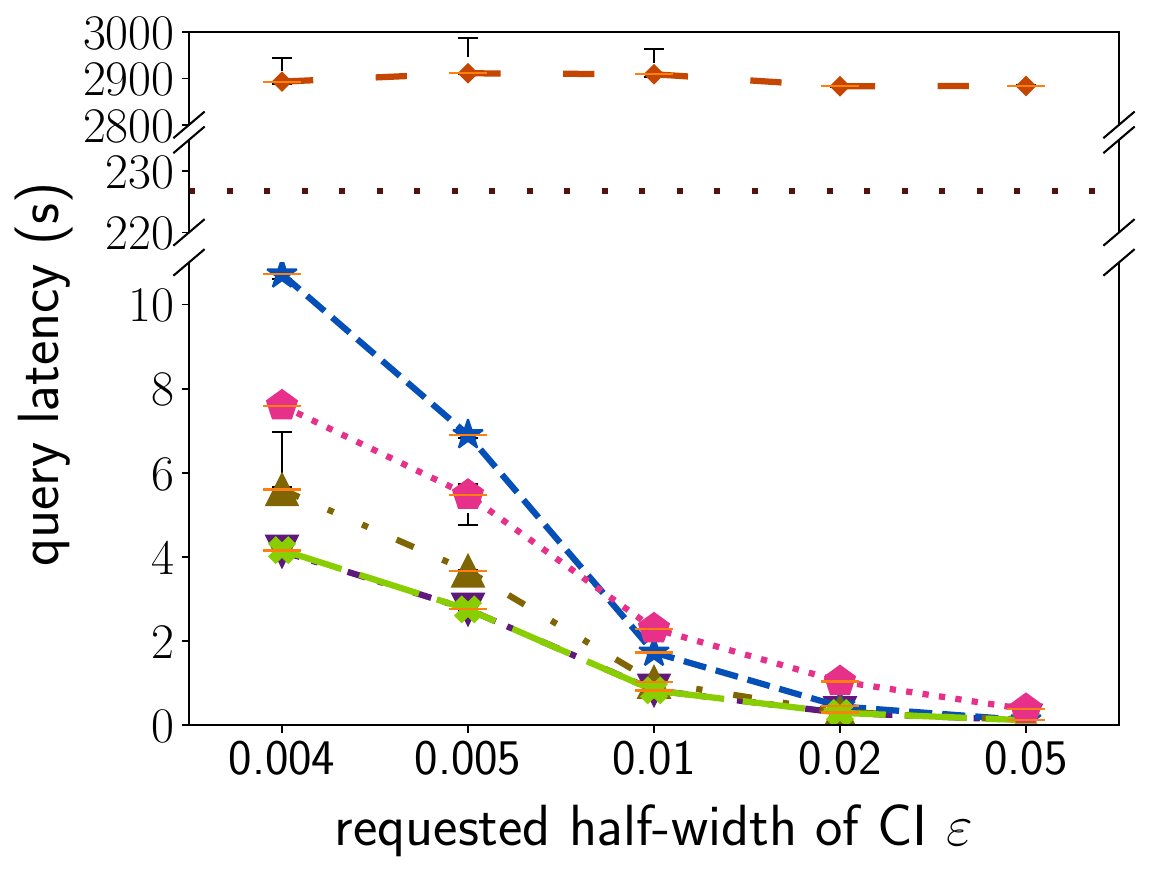}
        \label{fig:intel_latency} 
    }\hfill
	\subfigure[\revc{census}]{
        \centering
        \includegraphics[width=.27\linewidth]{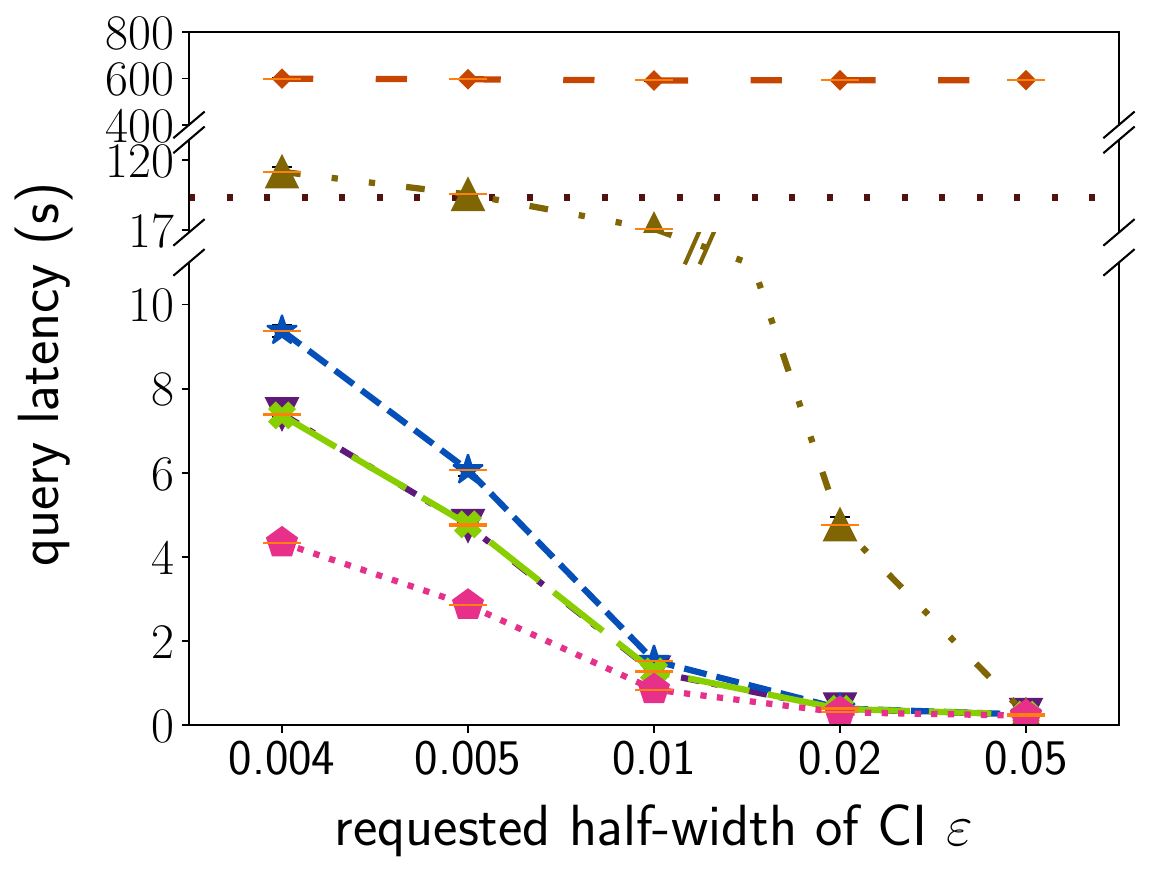}
        \label{fig:census_latency}
    }
    \vspace{-17 pt}
	\caption{\revc{Query latency with varying requested confidence interval}}
    \vspace{-15pt}
    \label{fig:query_latency}
\end{figure*}

\vspace{-8 pt}
\subsection{Latency improvement on real-world datasets}

We first evaluate \sysname's latency improvement over queries on real-world
datasets, which have high value/selectivity skewness.
For each experiment, we first perform exact query in
PostgreSQL, and obtain the exact answers. Then we request a half-width of
confidence interval $\varepsilon$ that corresponds to $0.4\%,0.5\%, 1\%, 2\%,
5\%$ of the exact answers.
Figure~\ref{fig:query_latency}
show the query latencies. Overall, all variants of \sysname outperform
 \texttt{Exact} by 1 - 4 orders of magnitude.
\revc{
Thanks to index-assisted sampling, our methods also significantly outperform
scan-based sampling method \texttt{ScanEqual} by even a larger margin.
For example, the speedup of \texttt{CostOpt}
compared to \texttt{ScanEqual} to achieve the same confidence interval is up to 
98708x on flight dataset, while the speedup of \texttt{CostOpt} over \texttt{Exact}
is up to 16421x on the same dataset.}

In general, \texttt{CostOpt}, \texttt{SizeOpt} and \texttt{Greedy} can
outperform baseline index-assisted sampling \texttt{Uniform} for lower
requested $\varepsilon$ on these datasets and queries. \revc{Specifically, \texttt{CostOpt} 
consistently speeds up queries
when compared to the baseline \texttt{Uniform} by up to 3x in our tested queries.
Compared to the baseline stratification strategy \texttt{Equal}, \texttt{CostOpt}
can achieve up to 14.6x speedup.
\texttt{SizeOpt} performs similarly to \texttt{CostOpt }on
these datasets because the resulting strata's LCAs are roughly at the same
height and thus have similar per-sample cost.} \texttt{Greedy}
performs better than \texttt{CostOpt} and \texttt{SizeOpt} on flight and census
dataset, but worse on intel dataset. The reason is \texttt{Greedy} can find a
good stratification with lower optimization overhead for flight and census, while
it fails to find a good stratification for intel based on the same stopping criteria.
Notably, \texttt{Greedy} even performs worse than the na\"{i}ve \texttt{Equal}
with equal sample size per stratum on intel.
This shows the better robustness of \texttt{CostOpt}.
In addition, we also verified that all methods except \texttt{ScanEqual} are returning
confidence intervals below the requested confidence intervals (figures are omitted
due to space constraint, which are available in the supplemental materials). Note that the error bars for all methods in \sysname
are quite small, indicating that for these high-variance queries, \sysname can
produce quite stable estimations.

\vspace{-10 pt}
\subsection{Scalability}

We evaluate the scalability with increasing data
size.  Specifically, we generate the lineitem table using our modified TPC-H
benchmark, with 3 high shipping delay date ranges and varying scale factors. We
report the query latencies across 10 runs of \texttt{CostOpt},
\texttt{SizeOpt}, \texttt{Greedy}, \texttt{Equal}, \texttt{Uniform}, 
and exact
query in Figure~\ref{fig:scalability}. We observe that \texttt{CostOpt} and
\texttt{SizeOpt} consistently run faster than \texttt{Uniform} under all scale
factors. Notably, at scale factor $60$, \texttt{CostOpt} starts to outperform
\texttt{SizeOpt}, because the LCA heights of strata start to vary on larger
datasets.  Except for scale factor 10, \texttt{Greedy} outperforms
\texttt{Uniform}, and the trend further increases when the scale factor
increases.  The reason is that there are more records associated with the high
shipping delay date ranges when the scale factor increases, leading to even
higher estimator variance without partitioning.  \texttt{Greedy} is able to
optimize the estimator variance through an optimized partitioning to 
speedup of about 3.4x in query latency at scale factor 60. In contrast,
\texttt{Equal} shows a rapid increase in latency at scale factor 60, indicating
that equal allocation without considering the data distribution of the large
dataset is inferior to optimized stratification. Exact query latency, as
expected, increases super-linearly as scale factor increases, which is almost
an order of magnitude higher than \texttt{CostOpt} and \texttt{Greedy} at scale
factor 60. This experiment demonstrates good scalability of \texttt{CostOpt}
and \texttt{Greedy}. 

\vspace{-8 pt}
\subsection{Impact of varying estimator variances}


\begin{figure}[t]
    \centering
    \begin{minipage}{.7\linewidth}
        \includegraphics[width=1.5\linewidth,clip=true,viewport=0 400 900 455]{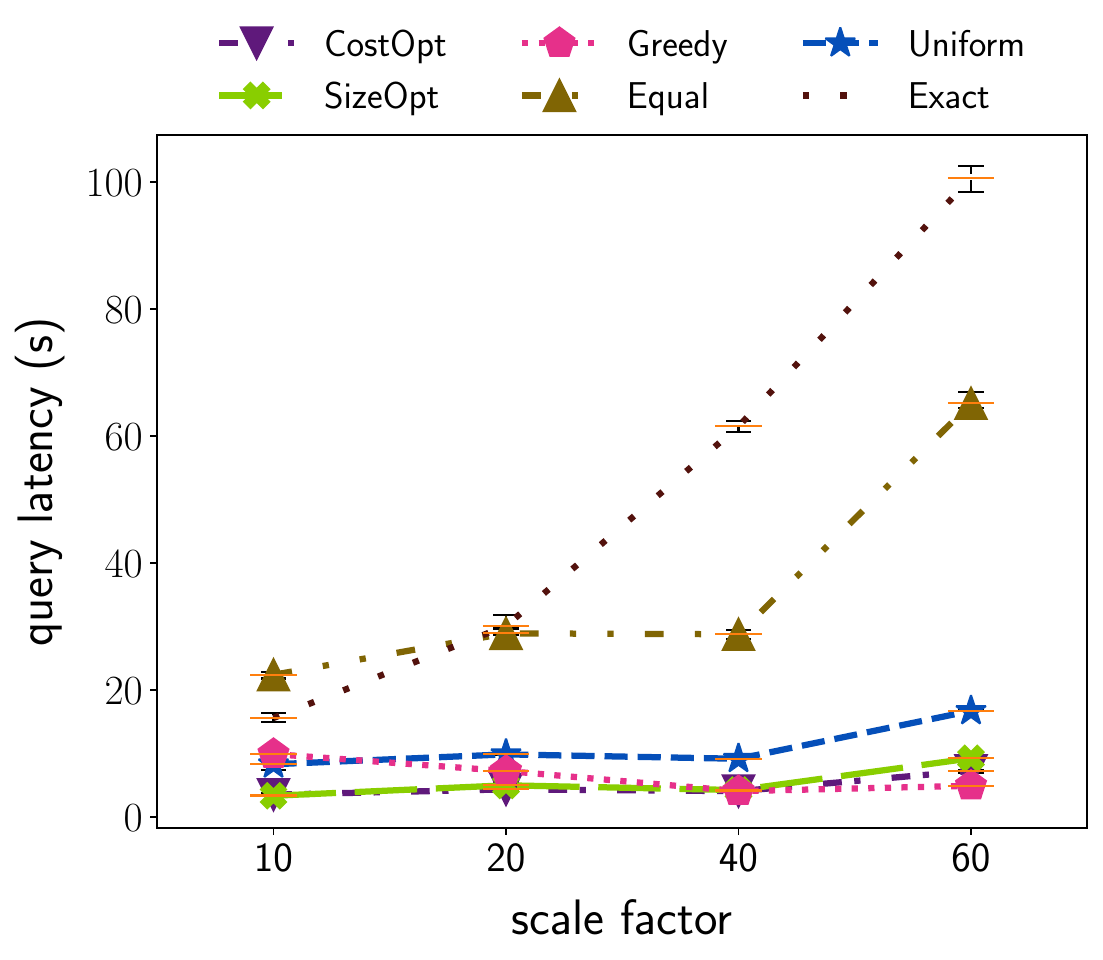}
    \end{minipage} \\
	\vspace{-2.8 mm}
    \subfigure[Varying Scale Factors]{
        \includegraphics[width=.47\linewidth]{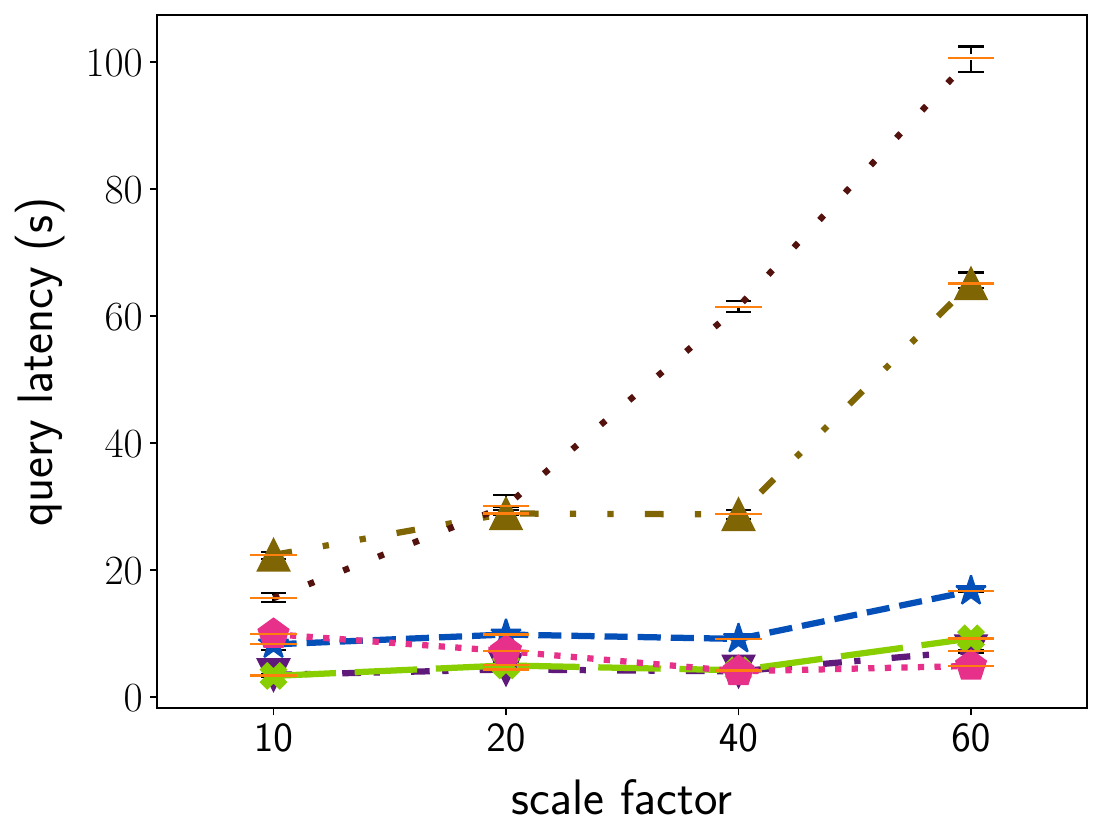}
    \label{fig:scalability} 
    } \hfill
    \subfigure[Varying high variance regions]{
        \includegraphics[width=.455\linewidth]{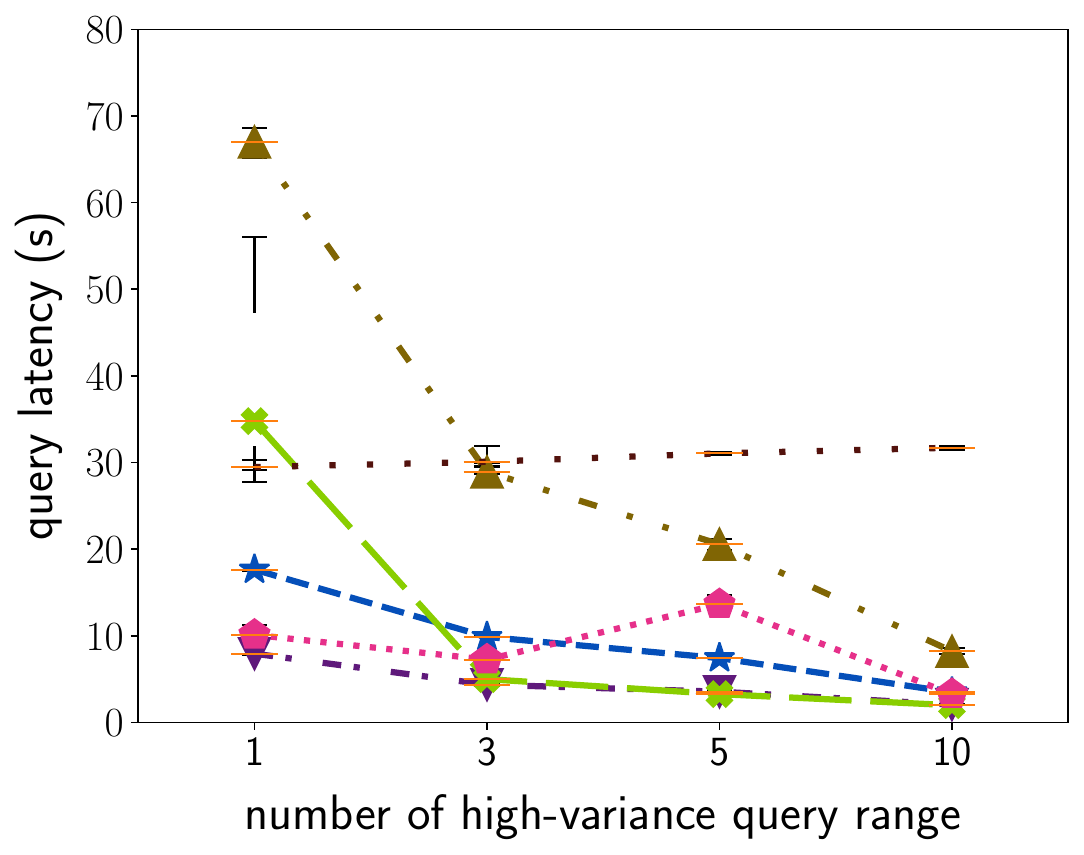}
        \label{fig:tpch_overall_variance}
    }
    \vspace{-15pt}
	\caption{\revc{TPC-H latency}}
    \vspace{-8pt}
\end{figure}

\begin{figure*}[t]
    \centering
    \subfigure[flight]{
        \includegraphics[width=.28\linewidth]{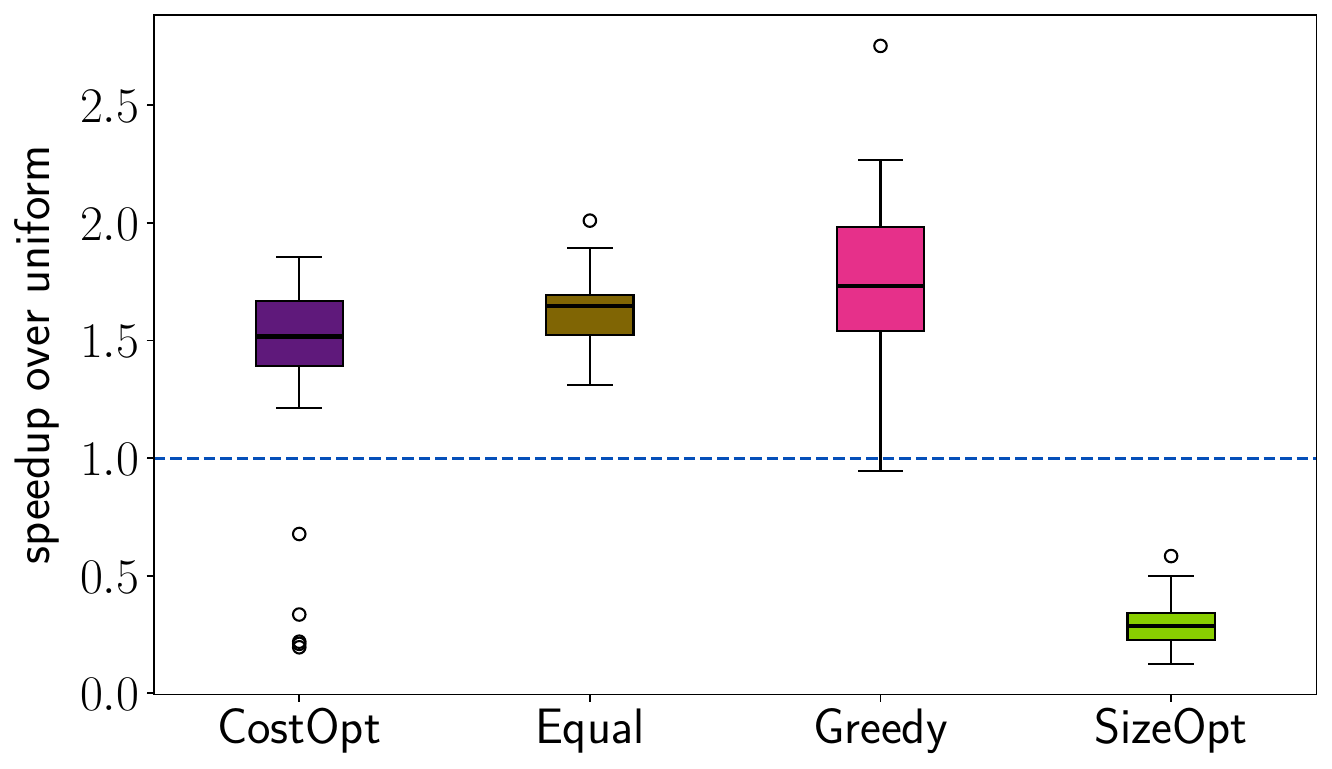}
        \label{fig:speedup_flight}
    }\hfill
        \subfigure[census]{
        \centering
        \includegraphics[width=.28\linewidth]{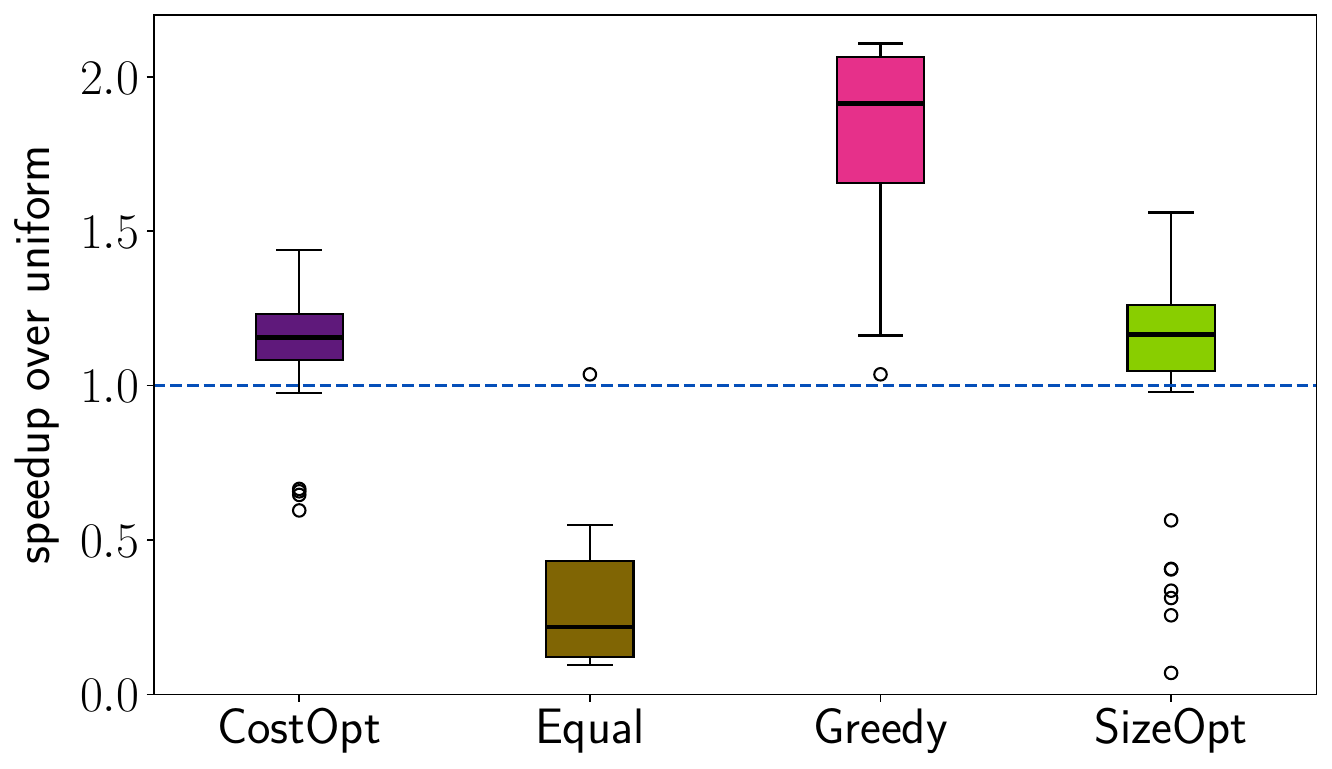}
        \label{fig:speedup_census}
    }\hfill
    \subfigure[TPC-H]{    
	\centering
        \includegraphics[width=.28\linewidth]{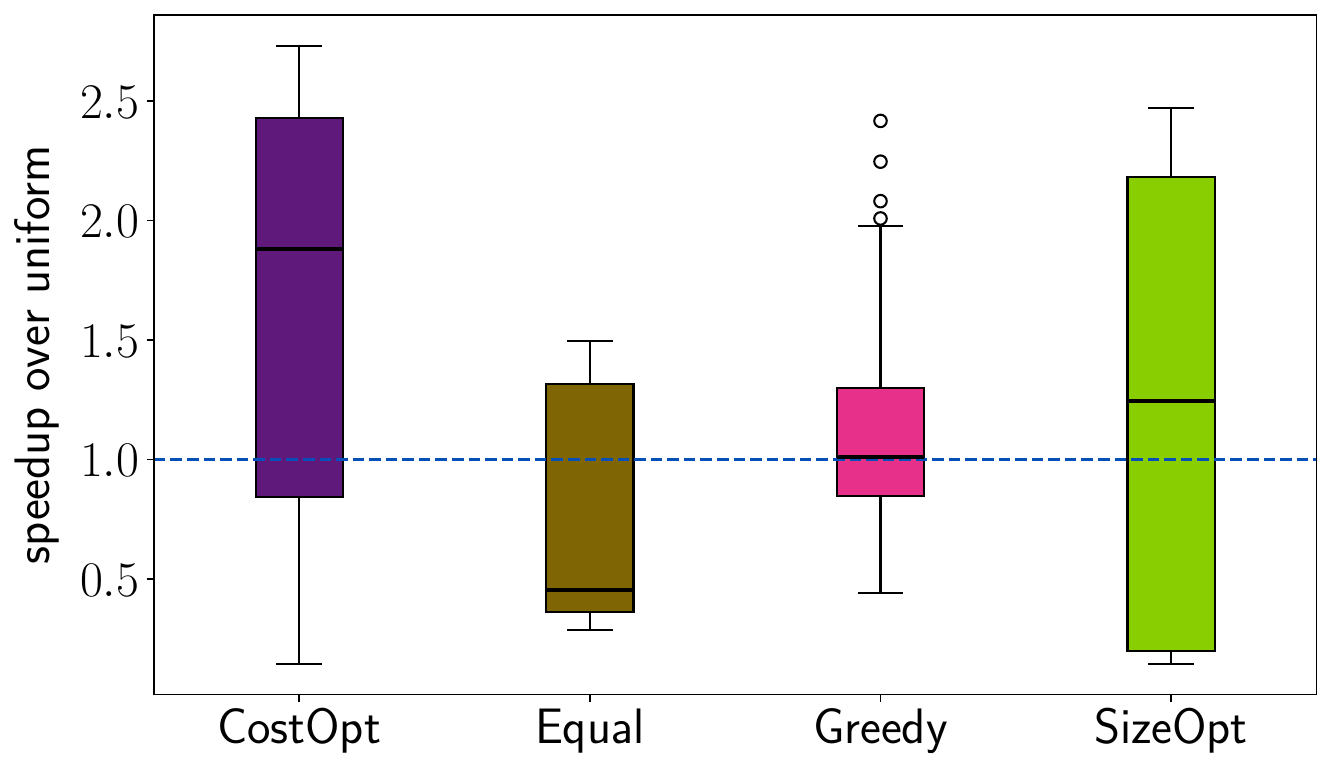}
         \vspace{-15 pt}
	\label{fig:speedup_tpch}
    }
	\vspace{-15 pt}
    \caption{Speedup over uniform with randomly generated query ranges}
    \vspace{-15pt}
    \label{fig:speedup}
\end{figure*}

We also vary the number of high shipping delay date ranges we generate
in the lineitem table of TPC-H (SF 20). The more high
shipping delay date ranges we generate, the smaller the overall
variance is because a larger proportion of data will have similar
higher values. Therefore, this experiment evaluates the margin for
improvement under different overall estimation variances. The results
are shown in Figure~\ref{fig:tpch_overall_variance}.  \texttt{CostOpt}
is consistently the best method among all. In contrast,
\texttt{Greedy} and \texttt{SizeOpt} demonstrate less robustness.
\texttt{Exact} slightly increases in time as the overall
variance increases, due to an increase of the total number of records.




\vspace{-8 pt}
\subsection{\revb{Speedup on randomly generated queries}}

\revb{
In previous experiments, we mostly show our speedup for queries with high
estimator variances to begin with, as our method is designed for that. However,
to further demonstrate the overall performance improvement in realistic workload,
we further show its speedup over a set of randomly generated queries over
these datasets.
Specifically, Figure~\ref{fig:speedup} shows the speedup of
\texttt{CostOpt}, \texttt{Equal}, \texttt{Greedy}, \texttt{SizeOpt} and
\texttt{Uniform} on flight, census and TPC-H lineitem table while randomly generated query
ranges, where the speedup is computed as the ratio of the query latency of other
methods to the query latency of \texttt{Uniform}. \texttt{CostOpt} is the most
robust method while \texttt{Greedy} is mostly robust with a similar speedup. As expected,
some are less optimizable due to small estimator variance, so our methods could be less effective or even cause a small slow down. However, in most query ranges on these real-world
and synthetic datasets, the estimator variance is high enough for these methods
to have sizable speedup. We also want to note that, in rare cases, there are
outliers of these methods such that the slow down is significant. However, they
are easy to mitigate in practice: As we can compute whether the resulting
confidence interval is close to what we expect for a particular stratification 
strategy during Phase 1, we can run Phase 1 in incremental steps similar to traditional online aggregation~\cite{hellerstein97:_onlin},
compute the resulting confidence interval and compare it against the estimated
confidence interval based on Phase 0 statistics. If they differ too much, then
we can immediately switch back to \texttt{Uniform} to avoid further slow down.
In contrast, the two baseline methods \texttt{Equal} and \texttt{SizeOpt}
are significantly volatile in speedup and can have larger proportion that
causes slow down, because they do not have optimality guarantees.}

\vspace{-8 pt}
\subsection{Parameter tuning}
\label{sec:hyperparameter_eval}

\begin{figure}[t]
    \subfigure[Query latency]{ \centering
    \includegraphics[width=.48\linewidth]{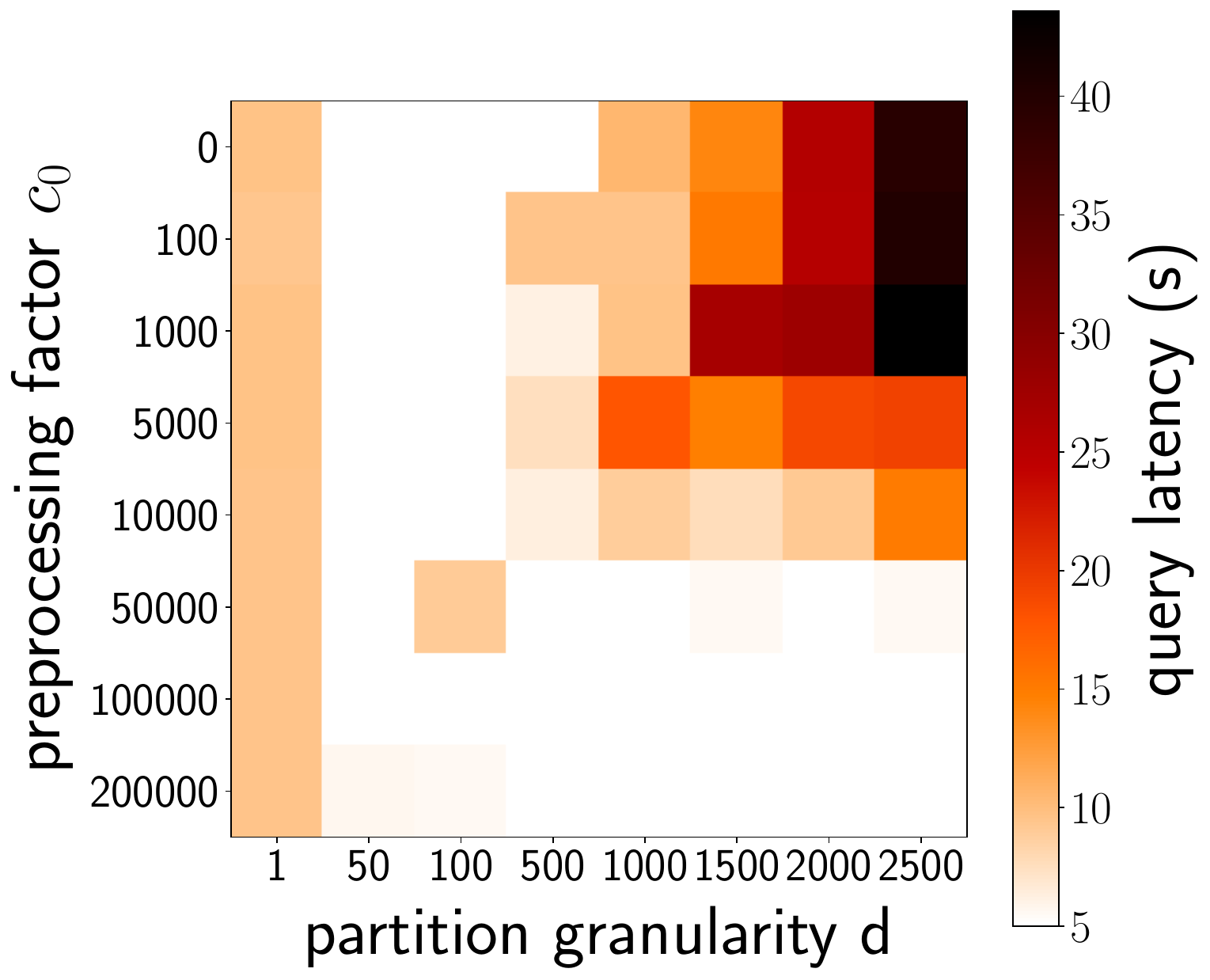}
    \label{fig:total_time_heapmap} }\hfill
    \subfigure[Total sample size]{ \centering
    \includegraphics[width=.48\linewidth]{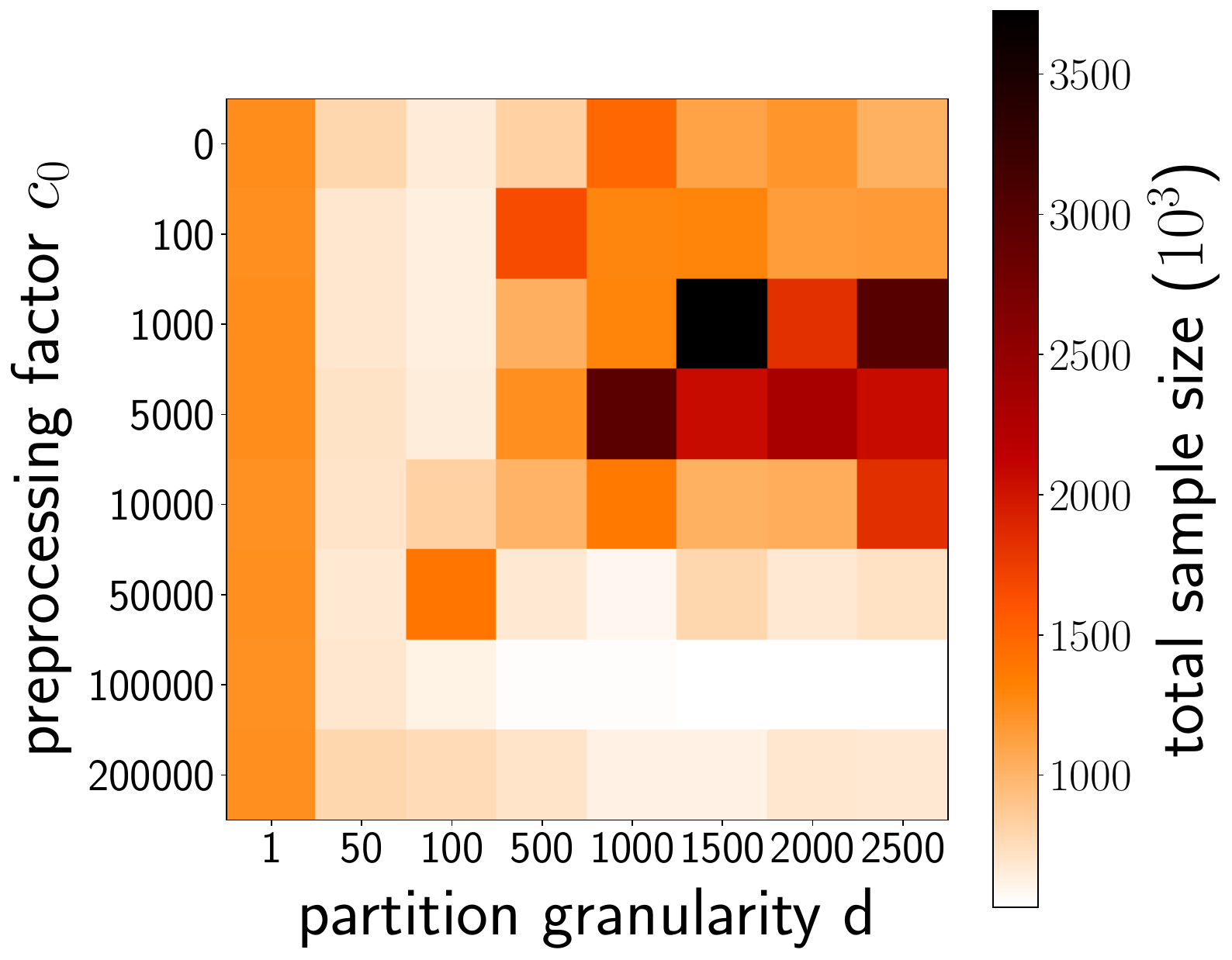}
    \label{fig:sample_size_heatmap}}\\
    \vspace{-10pt}
    \subfigure[Optimization time]{\centering
    \includegraphics[width=.48\linewidth]{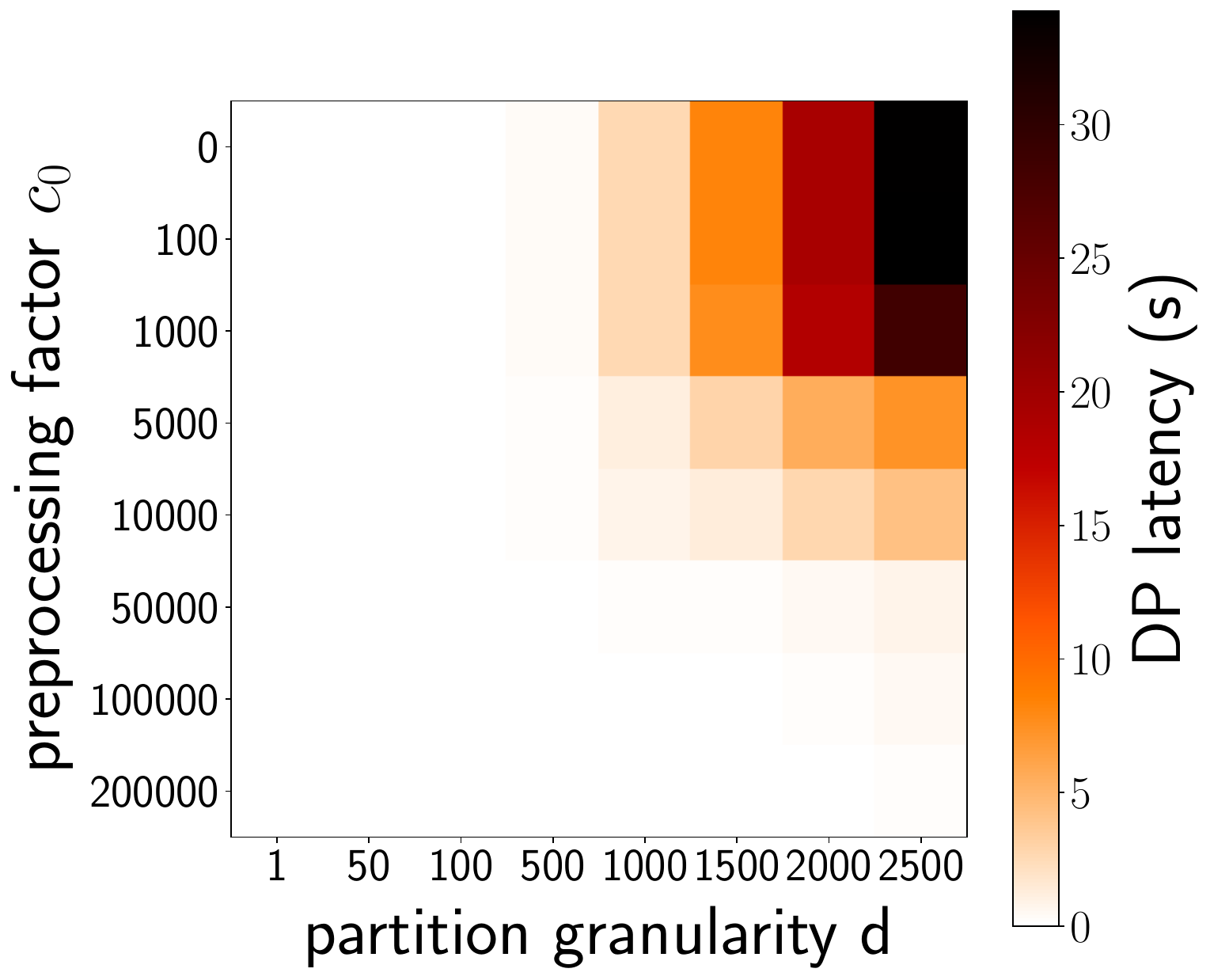}
    \label{fig:dp_time_heatmap} }
    \subfigure[Resulting number of partitions]{ \centering
    \includegraphics[width=.48\linewidth]{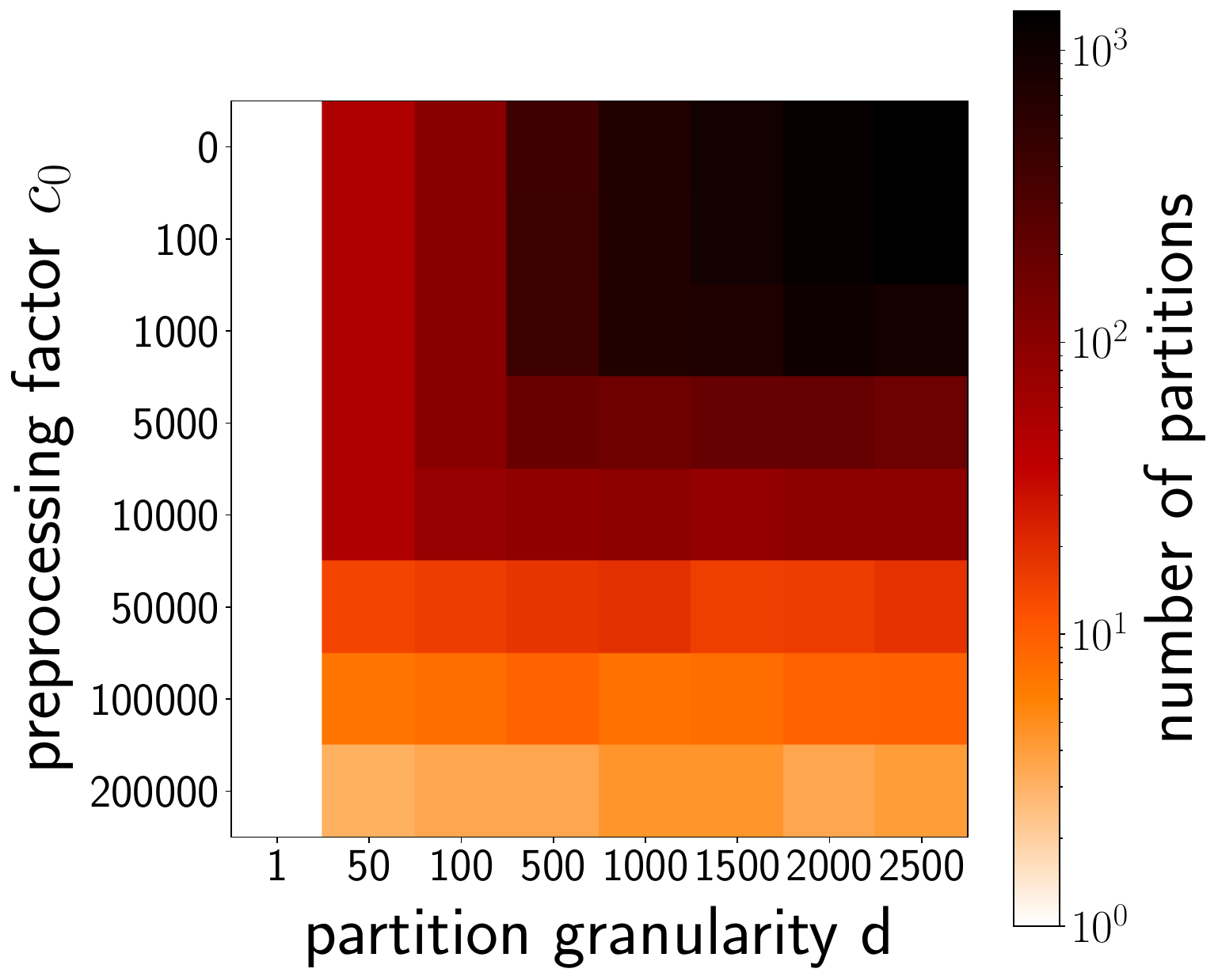}
\label{fig:num_partitions_heatmap} }
    \vspace{-15pt}
    \caption{Varying parameters for \texttt{CostOpt}} 
    \label{fig:hyperparameter_tuning}
\end{figure}

\begin{figure}[t]
    \begin{minipage}{.48\linewidth}
    \centering
    \includegraphics[width=\linewidth]{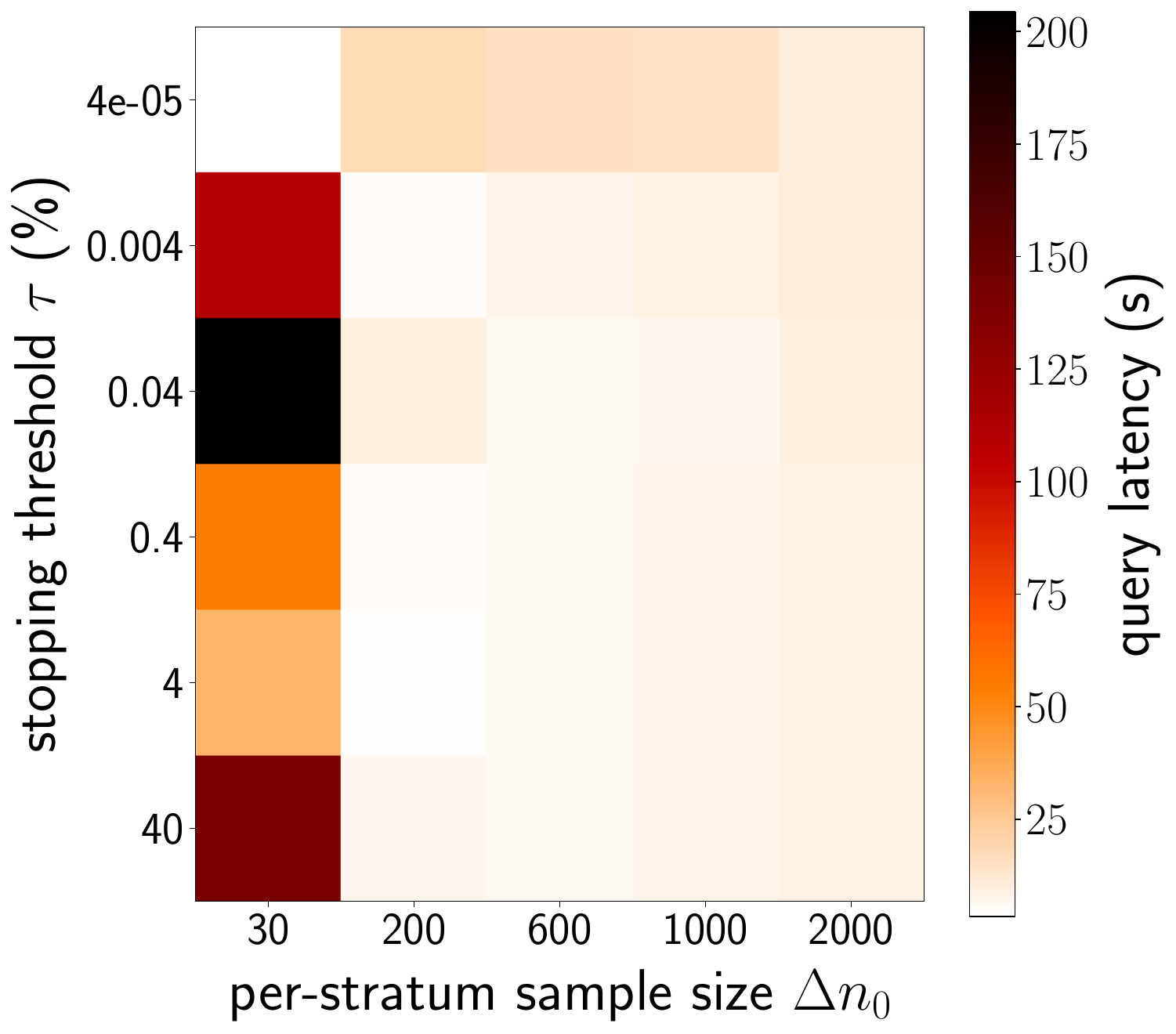}
    \vspace{-25 pt}
    \caption{Varying parameters for \texttt{Greedy}}
    \label{fig:hyper_tree}
    \end{minipage}\hfill
    \begin{minipage}{.48\linewidth}
    \centering
    \includegraphics[width=\linewidth]{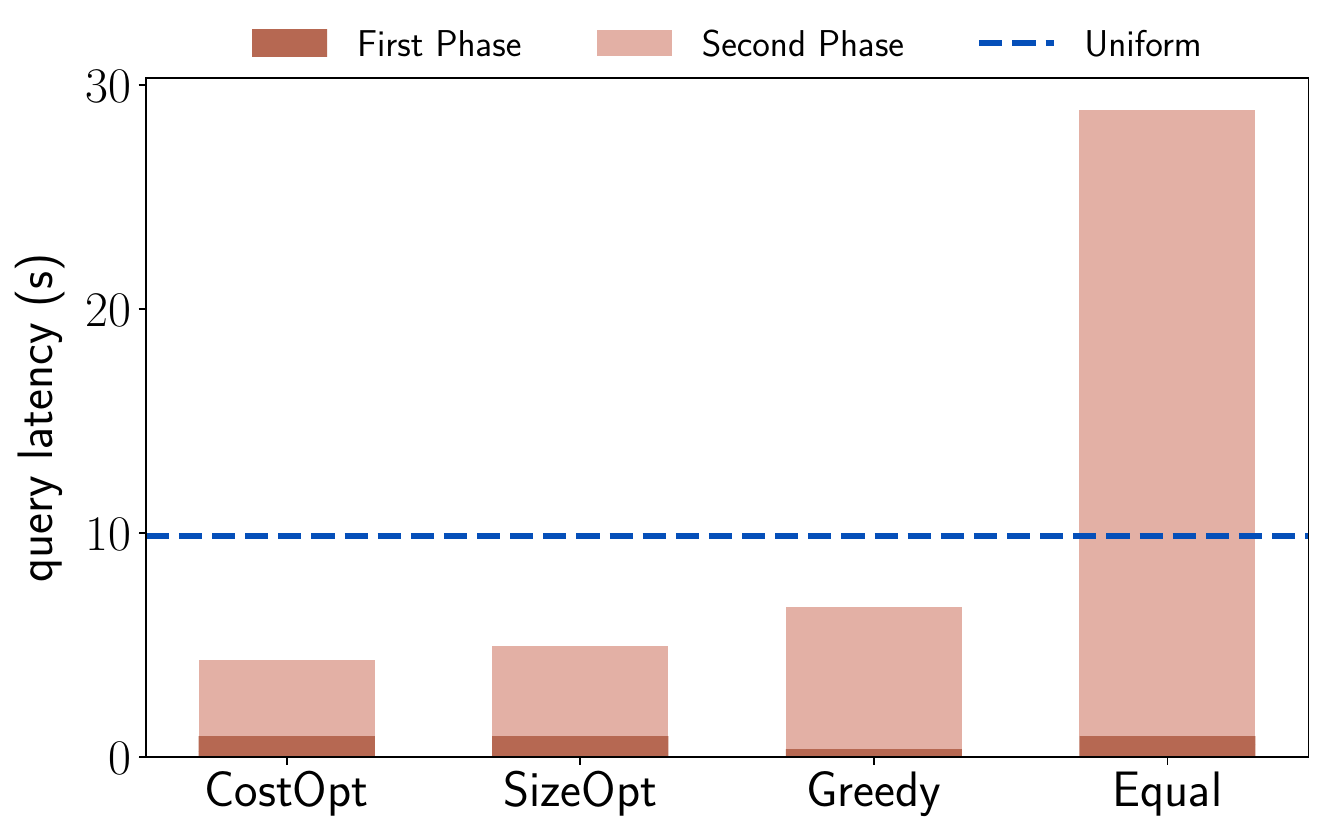}
    \vspace{-10 pt}
    \caption{Execution time breakdown for different stratification strategies over TPC-H} 
    \label{fig:breakdown}
    \end{minipage}

\end{figure}

We next investigate the impact of algorithm parameters
on \texttt{CostOpt} and \texttt{Greedy} using TPC-H lineitem scale factor 20
with 3 high-variance query ranges.

\revc{\Paragraph{Hyper-parameter selection.} The selection of hyper-parameters
is guided by the trade-offs between optimization overhead and benefit. As
demonstrated in Figure~\ref{fig:hyperparameter_tuning} and
Figure~\ref{fig:hyper_tree}, our methods remain robust performance across a
large amount of hyper-parameters. The default settings are recommended because
it consistently provides good trade-offs and low query latencies. For
\texttt{CostOpt}, preprocessing factor $c_0$ represents the constant overhead
of tree traversals and data structure settings. When migrating to a new
environment, it requires only a one-time calibration, which is similar to other
cost constants in optimizer. Partition granularity $d$ directly influences the
$O(d^3)$ dynamic programming algorithm. We find that while increasing $d$ can
do produce slightly better estimation, it makes optimization too expensive with limited
return, which leads to increasing in query latency. So, we recommend using our default
settings for $d$. However, if a user finds that a particular $d$ does not work
too well, then they may want to maximize $d$ until optimization
overhead becomes unacceptable (as projected based on the $O(d^3)$ complexity).
In contrast, \texttt{Greedy} are more sensitive to its hyper-parameters because
the decrease of estimator variance in each stratification step highly depends
on the data distribution.  In real deployment, we recommend using a reference
workload or historical workload to identify the hyper-parameter that works best
overall.}

\reva{\Paragraph{Query latency breakdown.} Figure~\ref{fig:breakdown} provides
a breakdown of query latency under our default settings for different methods
on TPC-H query.  First phase represents the time for drawing samples in Phase 0
and the subsequent stratification optimization time. Second phase is the
remaining time for fetching samples in Phase 1 until the approximate
aggregation reaches the specified relative CI 0.01.  For reference, we also show
the total execution time with non-stratified method \texttt{Uniform} as a
dashed line. In general, if we set parameters of each of the methods to have less than $1$
second Phase 0 time, \texttt{CostOpt} and \texttt{Greedy} typically have good
speedup compared to \texttt{Uniform}. The improvement of \texttt{SizeOpt} and
\texttt{Equal} can vary and highly depend on datasets and query.
}

\begin{figure}[t]	
        \begin{minipage}{\linewidth}
		\centering
		\includegraphics[width=0.7\linewidth, clip = true, trim=50 395 50 10]{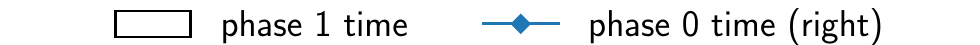} 
	\end{minipage}\\
	\subfigure[flight]{ \centering
	\includegraphics[width=.495\linewidth]{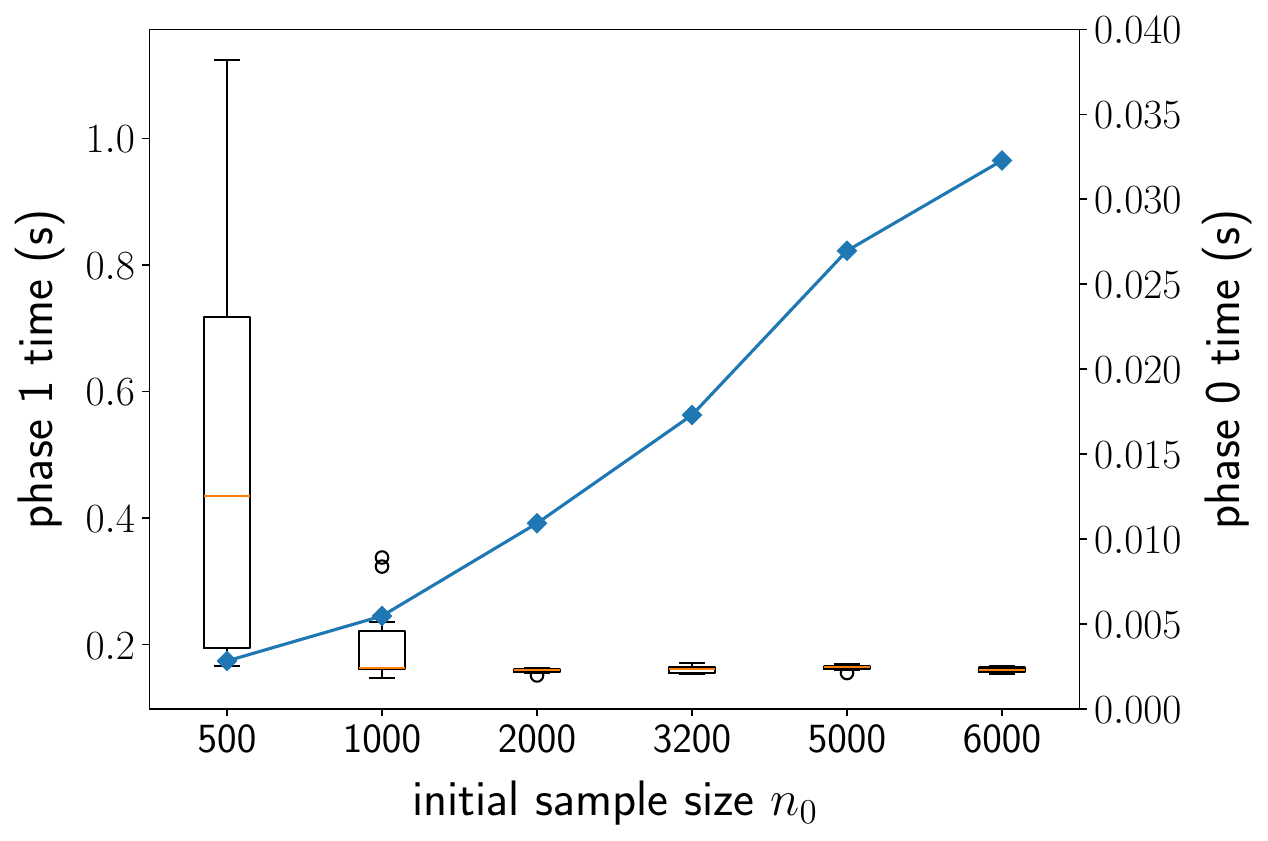}
    \label{fig:n0_flight} }\hfill \subfigure[TPC-H]{ \centering
    \includegraphics[width=.465\linewidth]{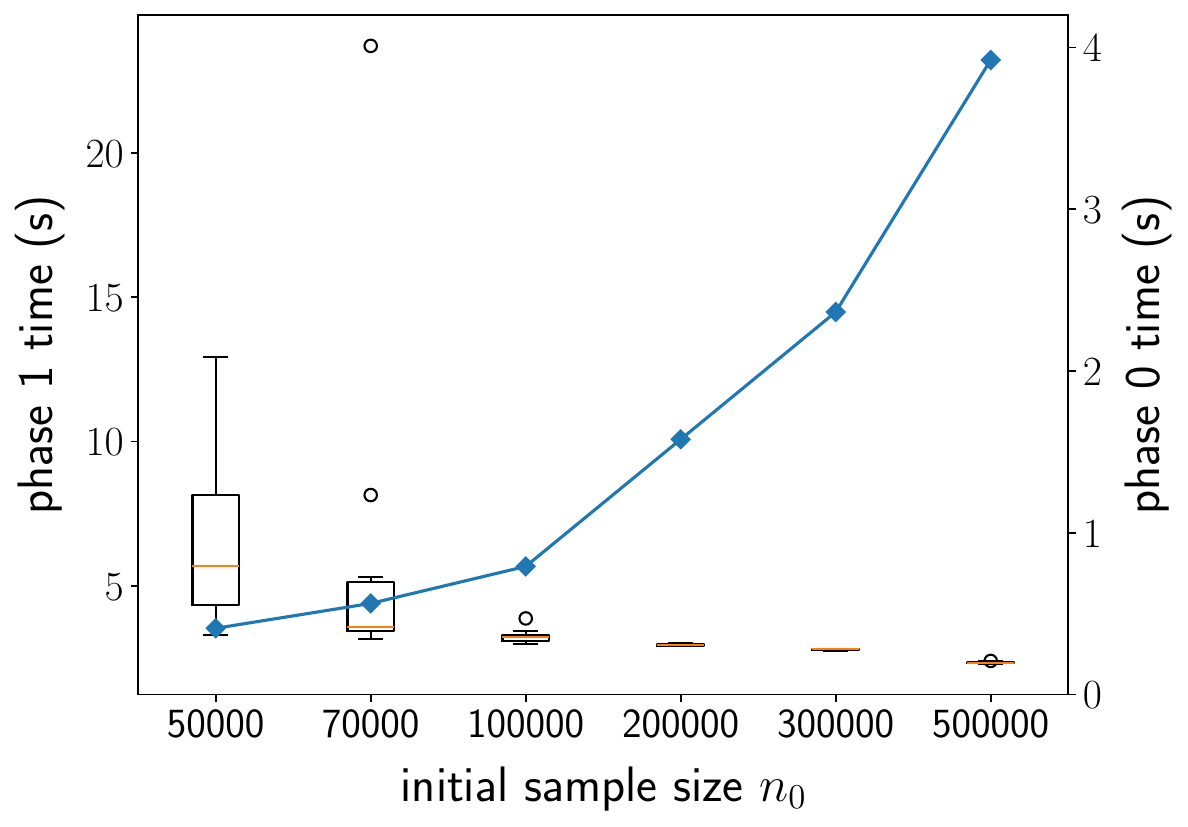}
    \label{fig:n0_tpch}}\hfill
    \vspace{-10 pt}
    \caption{Varying the initial sample size $n_0$ for \texttt{CostOpt}}
        \vspace{-3pt}
    \label{fig:varying_n0}
\end{figure}



\reva{\Paragraph{Impact of varying initial sample size $n_0$.} Clearly, the initial
sample size can greatly impact the distinct values we find and the accuracy
of statistics we estimate for stratification optimization, which in turn can
impact the effectiveness and the overall speedup. We evaluate the impact of initial sample size $n_0$ in \sysname on the
flight dataset and the lineitem table (SF 20) with 3 high
delay ranges.
For each
different $n_0$, we run the experiments for 10 times and plot the error bars in
Figure~\ref{fig:varying_n0}. The actual time for
first phase includes optimization time. As shown in the figures, as $n_0$
increases, the actual time for the second phase becomes more stable. However,
if too many $n_0$ is requested, the actual time for the first time will
increase without reducing much of the actual time of the second phase.
In practice, if distinct values are not stored separately,
we recommend setting $n_0$ to $\tilde{O}(NDV)$ such that every distinct
value will have a sufficient support for accurate statistics estimation in Phase 0.
However, this may become too expensive over datasets with too many distinct values,
so practitioner may also want to bound the Phase 0 time with a constant cost.
Hence, in our experiments, we choose the smaller of $200 NDV$, or $100,000$ (which
roughly corresponds to $1$ second of Phase 0 time on average), but these can be
further adjusted based on actual need and dataset. In addition, if the distinct
values already exist separately, e.g., encoded in a dictionary or catalog, then
our Phase 0 execution can be modified to directly draw $200$ samples for
each distinct value.
}

\vspace{-8 pt}
\section{Related work and possible extension}
\label{sec:related}

There are many works that optimize approximate query latencies in
different AQP system settings. Most related are those that design
stratified sampling strategies.  Previous works materialize offline
samples used repeatedly for approximate
queries~\cite{10.1145/3183713.3196905,DBLP:conf/eurosys/AgarwalMPMMS13}
or draw online samples with table
scans~\cite{Kandula:2016:QLA:2882903.2882940} , where the
stratification is done at the finest granularity with each distinct
value on a set of stratification columns. However, as we show in the
experiments, such strategy can cause too much per-partition overhead
on queries with too many distinct values. Note that these works do not
utilize the optimal Neyman allocation.

Closest to our work is due to Surajit et
al.~\cite{10.1145/1242524.1242526} but it was designed for
materializing an offline sample given a
query workload. In addition, their optimization target is also
different than ours because our setting, in which we optimize for
ad-hoc queries with fast index-assisted scans, significantly differs
from traditional AQP system setting that scans. Their approach is to first
create the fundamental regions given a workload (i.e., regions that
are read entirely or not at all by all queries in the workload), and
then further divide it up into finer regions to $6$ levels. This
approach is not feasible with index-assisted sampling because the
sampling index used in index-assisted sampling restricts what
strata (i.e., must be ranges for IRS index) we can efficiently
sample from. \cite{10.14778/3625054.3625059} also performs sample size
allocation but they focus on stratification on group-by and join
columns and uses an auto-encoder to generate synthetic samples with
similar data characteristics instead of drawing samples from the
database, which will highly depends on the quality of trained
autoencoder.

\cite{10.1145/3448016.3457277,10.1145/3183713.3183747} takes a different route
to selectively materializes subrange query results for a query workload, and
combines online scan-based sampling with pre-aggregated results, which are
orthogonal to this work and we may explore in the future.  Machine learning
based AQP
systems~\cite{10.14778/3384345.3384349,10.14778/3421424.3421432,10.1145/3299869.3324958}
and deterministic AQP systems~\cite{DBLP:conf/cidr/JoT20,DBLP:conf/sigmod/JoT20}
utilize pre-built models or data encoding to perform to perform queries over a
specific query template, which can often lead to extremely low query latencies
and high accuracy on the data and query trained or designed for. However, they
may require specialized training or customization for workloads.
PilotDB~\cite{10.1145/3725335} also performs a similar two-phase sampling to
first determine the estimator variance to decide how many samples to fetch in
the second phase. However, PilotDB uses \texttt{SYSTEM} sampling which in
general increases sample variance by sampling large blocks one at a time. More
importantly, it does not perform stratification to optimize latency for
high-variance data as \sysname does.

Our current work builds efficient IRS index
designs~\cite{olken93:_random,hu14:_indep,afshani_et_al:LIPIcs:2017:7859,afshani_et_al:LIPIcs:2019:10408,10.1145/3448016.3452806,10.1145/3626744,10.14778/3617838.3617840}.
We specifically build
upon~\cite{10.14778/3538598.3538606,10.14778/3611540.3611602} which enables
concurrent approximate queries over frequent updates.

\Paragraph{Possible extensions.}
While this work focuses on optimizing ad-hoc single-table approximate
aggregation queries where the table can be sampled with an IRS index,
many different extensions are possible by combining the more general
IRS and IQS (Independent Query Sampling, e.g., join sampling) indexing
techniques with our design. 
However, all such extensions may have
non-trivial design trade-offs beyond the scope of this work, and we
would like to further explore these directions in the future.

For example, Then this work can be naturally extended to a join
aggregation query with high estimator variance.  However, it is also
possible that more fine-grained stratification based on the (subset of)
join attributes combined with a predicate on the starting table may
help reduce high variance induced by join skewness. This work can also be extended to support
more general range queries such as spatial range aggregation queries
leveraging SIRS (Spatial Independent Range
Sampling)~\cite{10.1145/3448016.3452806} indexes.  However, in higher
dimensions, how to compose larger strata from smaller strata may become a more
complex optimization decision, which needs to be investigated further.  This
work may also extend to group-by queries with two possible strategies: (1) We
stratify on a composite key of group-by and range columns on which there is an
IRS index, and perform per-group two-level index-assisted approximate query
evaluation; or (2) We perform rejection sampling for each group by only
sampling from the range column on which there is an IRS index, similar
to~\cite{wanderjoin}.  They are different trade-offs in performance and
optimization overhead that require in-depth investigation.

\vspace{-8 pt}
\section{Conclusion}
\label{sec:conclusion}

In this work, we present \sysname, an index-assisted sampling system which
improves approximate range aggregation query latencies using a two-phase
index-assisted approximate query evaluation framework with optimized stratified
sampling. Experiments show \sysname significantly improves the approximate
query cost for query ranges with high variances compared to baselines. We also
discuss multiple possible extensions to handle additional query types in future
work, including joins, group-by and multi-dimensional range predicates.

\begin{acks}
    This work is supported by NSF IIS-2339596.
\end{acks}

\bibliographystyle{ACM-Reference-Format}
\bibliography{main}

\end{document}